\documentclass[12pt, draftclsnofoot, onecolumn]{IEEEtran}

\usepackage[numbers,sort&compress,square]{natbib}   
\usepackage{graphicx}
\usepackage{subfigure}
\usepackage{amsmath,amssymb}
\usepackage{amsfonts}
\usepackage{array}
\usepackage{amsthm}
\usepackage{mathrsfs}
\usepackage{amsmath}
\usepackage{setspace}
\usepackage{color}
\usepackage[ruled,linesnumbered]{algorithm2e}
\usepackage{epstopdf}
\usepackage{soul}





 
\newcommand{\beq}{\begin{equation}}
 \newcommand{\eeq}{\end{equation}}
  
 \newcommand{\bea}{\begin{eqnarray}}
 \newcommand{\eea}{\end{eqnarray}}


\newtheorem{theorem}{Theorem}

\newtheorem{lemma}[theorem]{Lemma}

\theoremstyle{definition}
\newtheorem{define}[theorem]{Definition}

\newtheorem{remark}[theorem]{Remark}
\newtheorem{example}[theorem]{Example}

\usepackage{color}

\begin{document}

\title{\huge To Motivate Social Grouping in Wireless Networks}

\author{\normalsize \IEEEauthorblockN{Yu-Pin Hsu\IEEEauthorrefmark{1} and
Lingjie Duan\IEEEauthorrefmark{2}\\}
\IEEEauthorblockA{\IEEEauthorrefmark{1}Laboratory for Information and Decision Systems, Massachusetts Institute of Technology\\}
\IEEEauthorblockA{\IEEEauthorrefmark{2}Engineering Systems and Design Pillar, Singapore University of Technology and Design\\}
\IEEEauthorblockA{{yupinhsu@mit.edu, lingjie\_duan@sutd.edu.sg}}
}


\maketitle

%
%

\begin{abstract}
We consider a group of neighboring smartphone users who are roughly at the same time interested in the same network content, called \textit{common interests}.
However, ever-increasing data traffic challenges the limited capacity of base-stations (BSs) in wireless networks.   To better utilize the limited BSs' resources under  \textit{unreliable} wireless networks,  we propose local common-interests sharing (enabled by D2D communications) by \textit{motivating} the physically neighboring users  to form a \textit{social group}. 
As users are selfish in practice, an incentive mechanism is needed to motivate social grouping. We  propose a novel concept of \textit{equal-reciprocal} incentive over \textit{broadcast} communications, which fairly ensures that each pair of the users in the social group share  the same amount of content with each other. 
As the equal-reciprocal incentive may restrict the amount of content shared among the users,  we analyze the \textit{optimal} equal-reciprocal scheme that maximizes local sharing content. While ensuring fairness among users, we show that this optimized scheme also maximizes each user's utility in the social group.   Finally, we look at \textit{dynamic} content arrivals  and extend our scheme successfully by proposing novel \textit{on-line} scheduling algorithms. 
\end{abstract}


\section{Introduction}\label{section:intro}

Billions of people around the globe rely on wireless devices for  conferencing,  streaming, and file downloading. Unfortunately, wireless networks are inherently \textit{less reliable} than wired networks. Moreover, the \textit{limited} usable spectrum of base-stations (BSs) poses significant challenges to network designers.  In this paper, we are addressing the key issues of unreliability and limited resource in wireless networks by  leveraging the \textit{broadcast} nature of wireless communications. 


%
%

To that end,  we first notice that  video traffic will increase to 72\% of the total traffic by 2020 \cite{cisco-2015}. Moreover, it becomes more and more popular that people watch videos on their own smartphones or personal devices \textit{individually}. A group of friends would be watching \textit{live} sport-games (like football or soccer) or TV programs together, such as at a bar, a camping site, or a bus. In fact, half of males, aged 18-34, look at videos  with friends in person according to \cite{google}. 

In particular, we are motivated by the \textit{MicroCast} system, recently proposed and implemented as an Android application  (e.g., see \cite{hulya-1,hulya-2,hulya-3}). In the MicroCast scenario,  a small group of smartphone users within the proximity of each other are almost at the same time interested in the same content. We refer to the same content for different users as  \textit{common interests}. We remark that due to growing attention of social networks \cite{social:Li}, which exhibit homophily by sharing common interests and similar behaviors, a Google patent \cite{d2d-patent} defines a device-to-device (D2D) service type that finds people who share common interests. 
 
Traditionally  users communicate with   BSs via their cellular links \textit{independently}, i.e., the BSs unicast  streaming to the users. Ideally, if the channels are noiseless, identifying the common interests of the users can save the time to deliver the common interests by leveraging the \textit{broadcast} BS-to-device (B2D) medium. However, because of unreliable wireless channels, we first show that the broadcast B2D (i.e., identifying the common interests) is not that effective unless the local users are willing to share the common interests together.   

Due to the emerging \textit{broadcast} D2D technology,  neighboring users that are physically close  can communicate with each other. Through local content sharing over D2D communications,  the BSs' bandwidth can be saved. For example, if a user in Fig. \ref{fig:motivation} loses content from the BS's broadcast, the user can still obtain it from other users via D2D sharing.  We consider \textit{hybrid networks} including both  the \textit{broadcast} B2D  and \textit{broadcast} D2D communication technologies, as shown in Fig. \ref{fig:motivation}.  In other words, we are leveraging both femto-caching \cite{femto-caching} and D2D-caching \cite{d2d-caching}, where the femto-caches in BSs store common interests for users, while some information can be stored in D2D-chases of local users for possible future sharing.

\begin{figure}[t]
\centering
\includegraphics[width=.5\textwidth]{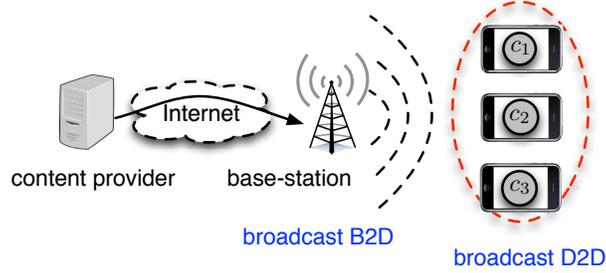}
\caption{Co-existence of broadcast BS-to-device (B2D) and   broadcast device-to-device (D2D) communications. Users $c_1, c_2, c_3$ are simultaneously interested in   the same streaming.}
\label{fig:motivation}
\end{figure}

%
%

However, the users are selfish in general; as such, we cannot expect that the selfish users are willing to share their own resources. To facilitate D2D communications by building up a social relationship, there is a critical need to design an incentive mechanism that provides the neighboring users with a motivation to form a social group. Moreover, the users  would appreciate a free and simple sharing service without complex money exchanges.

Hence, we propose a novel \textit{non-monentary} incentive concept, which is referred to as \textit{equal-reciprocal} scheme.
Applying the equal-reciprocal incentive over the \textit{broadcast} hybrid networks,  the BS impartially requests each pair of the users in the social group to share the same  amount of content with each other. 



Considering users' heterogeneous B2D channels, as the equal-reciprocal incentive may limit the amount of content the users are willing to share,   more questions arise in order to optimize  network performance (e.g., throughput or delays): (1) \textit{Which users should share their resource?} (2) \textit{How much content should the users share with others}?  We will answer the questions by analyzing the \textit{optimal} equal-reciprocal scheme that optimizes network performance.

Furthermore, we  show that  the BS and the local users have a \textit{win-win} situation by using the proposed optimal equal-reciprocal scheme. The optimal equal-reciprocal scheme minimizes the expected  time to  deliver the  common interests to all users, and at the same time it maximizes a utility of each user in the group, where the utility is the difference between the amount of content uploaded from this user and those downloaded from the social group. In other words,  we are simultaneously solving both \textit{global} and \textit{local} optimization problems. Hence, in addition to the fairness resulted from the equal-reciprocal idea, the optimal equal-reciprocal scheme also maximizes each user's utility.

Finally, we propose \textit{on-line scheduling algorithms}, including both optimal centralized and sub-optimal distributed algorithms (easy to implement) to address a more general case of dynamic content arrivals. The algorithms adaptively select a user to share  received content in a timely fashion. We will analyze the proposed algorithms from both theoretical and numerical perspectives. 

\subsection{Related work} \label{section:related-work}
In this paper, we focus on motivating social grouping over the  hybrid networks. We explore several related research directions as follows.  

\subsubsection{Local repair on multicast} 
      As our network scenarios have the notion of cooperatively recovering lost content, we first compare our work with local repair schemes for the multicast routing  (e.g., MAODV \cite{maodv}). 
Lots of such local repair schemes are proposed to find a substitute for a failure link; however, there is no notion of content sharing among a \textit{group} of users. In this paper, we consider a group of neighboring users with a potential to share content, where an incentive is required.

\subsubsection{D2D networks}
Recently, there is an increasing concern about  D2D communications, e.g., see the survey papers \cite{survey:liu,survey:Al-Kanj} and the references therein. Many work about D2D networks ignores  broadcast advantages (as also stated in \cite{abedini:streamming}), while  the most related work, which has the notions of multicast, broadcast, or social group, is \cite{hulya-1, hulya-2, hulya-3, abedini:streamming,cooperation:Seppala,cooperation:Andreev,
cooperation:lin,cooperation:Chen,cooperation-coding:Liu-1,cooperation-coding:Liu-2,cooperation-coding:alex-1,cooperation-coding:alex-2,cooperation-coding:yupin,cooperation-incentive:Li,cooperation-incentive:Cao}. 

A reliable D2D multicast notion is introduced in \cite{cooperation:Seppala}. The authors in \cite{cooperation:Andreev} analyze the performance of D2D-assisted networks using stochastic geometry. In \cite{cooperation:lin}, a D2D network is studied in terms of   optimizing multicast by choosing the optimal multicast rate and the optimal number of re-transmission times, while  \cite{cooperation:Chen} proposes a social group utility maximization framework. Moreover, there is some network coding design for D2D-assisted networks, e.g., \cite{cooperation-coding:Liu-1,cooperation-coding:Liu-2,cooperation-coding:alex-1,cooperation-coding:alex-2,cooperation-coding:yupin}. The work \cite{hulya-1, hulya-2, hulya-3, abedini:streamming, cooperation:lin,cooperation:Seppala,cooperation:Andreev,cooperation:Chen,cooperation-coding:Liu-1,cooperation-coding:Liu-2,cooperation-coding:alex-1,cooperation-coding:alex-2,cooperation-coding:yupin} basically assumes neighboring users are \textit{always} willing to help each other. 

We want to emphasize that we focus on the MicroCast scenario \cite{hulya-1,hulya-2, hulya-3}, in which a group of smartphone users are interested in the same content simultaneously. Based on the scenario,  the paper  \cite{abedini:streamming} devices a real-time  scheduling algorithm. On the contrary, we are designing and analyzing an incentive mechanism for the MicroCast scenario, i.e., to motivate social grouping.

\subsubsection{Incentive design for P2P/D2D networks}

Traditional incentive  schemes for P2P networks (e.g., the  tit-for-tat incentive \cite{p2p:neely} and \cite{p2p:Kamvar,p2p:fabil,p2p:zhou}) are \textit{unicast}-based, and do not take full advantage of the broadcast feature of wireless networks. 
Of the most relevant literature about incentive design for D2D networks (which uses social group or broadcast) is  \cite{cooperation-incentive:Cao,cooperation-incentive:Li}. They propose incentive schemes based on a coalitional game and mean field game, respectively, while \cite{cooperation-incentive:Li} needs money transfers and  the incentive mechanism in \cite{cooperation-incentive:Cao} is designed for a given social relationship. 

In contrast, we  propose  a \textit{non-monetary} incentive scheme that simply leverages  \textit{broadcast} D2D. We can view our proposed equal-reciprocal incentive as a generalization of the existing tit-for-tat scheme for broadcast networks. Better exploiting the broadcast D2D,   our scheme will bring many more advantages beyond the tit-for-tat incentive; however,  more challenges arise because a broadcast packet will be received by all neighboring users in a group, so the users are co-related with
each other. Furthermore, we also consider \textit{heterogeneity} of wireless channels as well as \textit{dynamic} content arrivals.

Despite the broadcast nature of wireless communications, few incentive design in D2D networks takes advantage of the broadcast
medium. This work contributes to this emerging line of research.

\subsubsection{Network coding}
Network coding can improve bandwidth usage in unreliable networks by leveraging broadcast medium, e.g., random linear network coding \cite{net-coding}. However, we are  addressing the similar issue from another perspective. We adopt a hybrid B2D and D2D architecture, while focusing on how to promote D2D communications by incentivizing neighboring users to construct a social relationship. We note that our scheme is compatible with network coding. In other words, network performance can be further improved by using both network coding and our scheme.

%
%
%

\section{Symmetric networks} \label{section:two-devices}

We start with a simple network that includes a BS and two wireless users $c_1$ and $c_2$. The BS has \textit{a} common packet $q$ (i.e., common interest) that needs to be delivered to $c_1$ and $c_2$, respectively, as illustrated in Fig. \ref{fig:two-device}. This simplified scenario allows us to deliver clear analytic results and clean engineering insights. More general cases will be analyzed in the following sections; in particular, we will investigate \textit{dynamic packet arrivals} in Section \ref{section:dyanmic}.

\begin{figure}
\centering
\includegraphics[width=.4\textwidth]{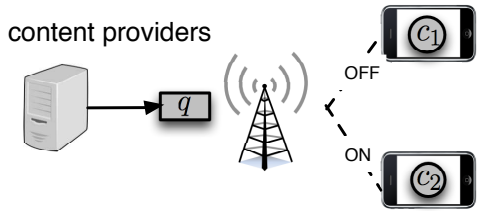}
\caption{Two users with a  common interest.}
\label{fig:two-device}
\end{figure}

We consider a discrete time system, in which the BS can transmit at most one packet in each time slot.  Transmissions  between the BS and each users are unreliable, where each B2D channel is describe as a time-varying ON/OFF channel\footnote{In this paper, we consider unreliable communications, i.e., either the BS does not use the rate-adaption to make error-free transmissions, or non-trivial B2D errors still occur with the rate-adaption scheme. As in many other papers, e.g., \cite{channel-model:i-hong,channel-model:yang}, our first focus is on the i.i.d. ON/OFF model to describe the reliability of wireless networks.}. By $\mathbf{s}(t)=(s_1(t),s_2(t))$,  we describe the channel state vector at time $t$, where $s_i(t) = 1$ if the BS can successfully transmit a packet to user $c_i$ at time $t$; otherwise, $s_i(t) = 0$. The channel state vector is assumed to be independent and identically distributed (i.i.d.) over time.  Moreover, we assume that the BS knows the channel state $\mathbf{s}(t)$ before transmitting a packet at time $t$, which can be achieved by existing channel estimation techniques (e.g. \cite{estimation}).       We will relaxing the assumptions in Section \ref{section:further-discussion} by considering correlated channels\footnote{The imperfect channel estimation and delayed feedback of the estimated channel states are beyond the scope of this paper.}.

In this section, we focus on the \textit{symmetric} users, where the channels have the same   error probability, i.e., $\mathbb{P}(s_i(t)=0)=p_e$ for all $t$ and $i$, while the \textit{asymmetric} case will be studied in Section \ref{section:asymmetric}. In this paper, we do not consider the mobility, i.e., the users are all staying in the same location for a certain amount of time; as such, assume a constant error probability in the following.

We are interested in the \textit{completion time} that is the expected time the BS takes to successfully send $q$ to all users.  Let  $T_{=}$ be the  completion time for \textit{broadcasting} the common interest to both users. We then can get\footnote{If we consider a  network consisting of only a BS and a user with its B2D channel error probability $p_e$,  the BS takes the expected time of $1/(1-p_e)$ to successfully deliver a packet to the user.}  
\begin{align}
T_{=}&=p_e^2(1+T_{=})+(1-p_e)^2 \cdot 1+2p_e(1-p_e)(1+\frac{1}{1-p_e}) \label{eq:Tc}\\
&=\frac{2p_e+1}{1-p_e^2}, \nonumber
\end{align}
where the second term in right-hand side (RHS) of Eq. (\ref{eq:Tc}) takes advantage of the broadcast medium when both channels are ON.



Here, we remark that though traditionally the users communicate with the BS via their cellular links \textit{independently} (i.e., the BS does not know if both users are requesting the same content; as such, can only \textit{unicast} their requested packets respectively), in Appendix \ref{appendix:b2d-identify} we will discuss an implementation of identifying the common interest; as such, the BS would deliver the common packet $q$ over the broadcast B2D. Moreover, we show that identifying the common interest under the unreliable networks will not that effective, as stated precisely in the following lemma.
\begin{lemma} \label{lemma:improve-common}
Let $T_{\neq}$ be the  completion time for the BS to deliver the common interest over the unicast B2D. The improvement ratio $R_{\neq/=}$ (defined below) from identifying the common interest is 
\begin{eqnarray*}
R_{\neq/=}:=\frac{T_{\neq}}{T_{=}}=\frac{p_e+2}{2p_e+1}. 
\end{eqnarray*} 
In particular, the ratio is 2 when $p_e=0$; the ratio is 1.25 when $p_e=0.5$. 
\end{lemma}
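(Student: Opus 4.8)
The plan is to compute $T_{\neq}$ by a first-step (one-slot) analysis and then simply divide by the already-derived $T_{=}=\frac{2p_e+1}{1-p_e^2}$ from Eq.~(\ref{eq:Tc}). I would index the states by the number of users who still lack the packet $q$. When exactly one user remains, the completion time is the single-user delivery time $\frac{1}{1-p_e}$ recorded in the footnote. When both users remain, I would condition on the realization of $\mathbf{s}(t)$, which is the crux of the argument.

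The key observation is how each channel realization moves the chain under \emph{unicast}. If $\mathbf{s}(t)=(0,0)$ (probability $p_e^2$) no one is served and the chain stays in the two-remaining state. If exactly one channel is ON (probability $2p_e(1-p_e)$) the BS serves that user and the chain drops to the one-remaining state. The decisive difference from the broadcast derivation of Eq.~(\ref{eq:Tc}) occurs when $\mathbf{s}(t)=(1,1)$ (probability $(1-p_e)^2$): because the BS can emit only one packet per slot and, lacking any knowledge that the interests coincide, addresses it to a single user, only one user is served and the chain again drops to the one-remaining state rather than terminating. Writing $T_{\neq}$ for the two-remaining completion time, first-step analysis then gives
\begin{equation}
T_{\neq}=1+p_e^2\,T_{\neq}+\bigl(2p_e(1-p_e)+(1-p_e)^2\bigr)\cdot\frac{1}{1-p_e},
\end{equation}
and since the bracketed probability collapses to $1-p_e^2$, solving this linear equation yields $T_{\neq}=\frac{p_e+2}{1-p_e^2}$.

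With both completion times sharing the denominator $1-p_e^2$, the ratio simplifies cleanly:
\begin{equation}
R_{\neq/=}=\frac{T_{\neq}}{T_{=}}=\frac{(p_e+2)/(1-p_e^2)}{(2p_e+1)/(1-p_e^2)}=\frac{p_e+2}{2p_e+1},
\end{equation}
after which substituting $p_e=0$ and $p_e=0.5$ recovers the stated values $2$ and $1.25$.

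I expect the only real obstacle to be the correct modeling of the $(1,1)$ event under unicast: one must argue that the broadcast advantage exploited by the middle term of Eq.~(\ref{eq:Tc}) is unavailable here, so that a doubly-ON slot still advances only one user, and that an opportunistic (rather than naive sequential) schedule is the right unicast baseline. This single modeling choice is exactly what turns the broadcast numerator $2p_e+1$ into the unicast numerator $p_e+2$, and hence is where the entire content of the lemma resides; the remaining algebra is routine. As a sanity check, note that a naive schedule serving one user fully before the other would give $\frac{2}{1-p_e}$ and thus $1.5$ at $p_e=0.5$, contradicting the lemma and confirming that the opportunistic interpretation is the intended one.
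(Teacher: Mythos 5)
Your proof is correct and matches the paper's own argument in Appendix~\ref{appendix:b2d-identify}: the paper writes the same first-step recursion $T_{\neq}=p_e^2(1+T_{\neq})+(1-p_e^2)(1+\frac{1}{1-p_e})$, which is algebraically identical to your equation after collapsing $2p_e(1-p_e)+(1-p_e)^2$ to $1-p_e^2$, and likewise relies on the opportunistic-scheduling interpretation (the BS knows $\mathbf{s}(t)$ and serves whichever user has an ON channel, so $T_{\neq}<2/(1-p_e)$). Your identification of the $(1,1)$ slot as the crux --- only one user advances under unicast --- is exactly the modeling point the paper's derivation encodes, so there is nothing to add.
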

\begin{proof}
Please see Appendix \ref{appendix:b2d-identify}.
\end{proof}

We observe that the decrease of the improvement ratio $R_{\neq/=}$ in $p_e$ is faster than a linear function (i.e., a convex function). In particular, when the channels are terrible (i.e., $p_e \rightarrow 1$), the identification  almost cannot decrease the  completion time.

\subsection{Motivating social grouping} \label{subsection:incentive}

To further utilize the common interest, we take advantage of the existing D2D communication technology. In other words, we are considering the hybrid networks including B2D and D2D communications. We assume that D2D communications do not occupy the same channels as B2D communications. When a user is making a local transmission over a D2D network, the BS can save bandwidth to serve other users. Moreover, we reasonably consider the half-duplex technique,      i.e., each user is not allowed to use both B2D and D2D interfaces simultaneously, and cannot transmit and receive a packet at the same time.  Because of a short-distance communication, we  assume that the D2D channels between the users are noiseless,  while  imperfect D2D communications will be discussed in Subsection \ref{subsection:unreliable-local}.

However, the local users are  selfish or not cooperative in general, we  propose a non-monetary incentive mechanism,  called \textit{equal-reciprocal} incentive, which fairly requests the users to share the same expected amount of content with each other.  By means of the equal-reciprocal incentive scheme, the neighboring users $c_1$ and $c_2$ are motivated to form a social group.      Here, we assume that the users are  \textit{rational}, who can be motivated by the equal-reciprocal incentive and follow the incentive algorithm, but not malicious, who do not want to share the content. To exclude the malicious users, the users in the social group can exchange a key before any content transfer; as such, users outside the social group cannot decode received packets from the social group. Moreover, the BS can be the identity who isolates the malicious users from the social group, as in the paper the BS is responsible for regulating transmissions between the users, including who should be included in the group and who should share their own content at a time.

We note that the equal-reciprocal incentive for the two symmetric users is similar to the tit-for-tat scheme (based on unicast communications), while it will be quite different and efficient for scenarios of more than two users due to  broadcast D2D communications, which will be discussed later. Moreover, we will show that the equal-reciprocal incentive not only provides a fairness guarantee like the tit-fot-tat, it also maximizes a local utility assuring that a user can get the maximum amount of content from the social group  (see Subection \ref{subsection:individual}). 

%

To analyze the completion time subject to the equal-reciprocal incentive,  we consider a sharing probability $p_{i \rightarrow j}$ that $c_i$ shares the received packet with $c_j$ when only $c_i$ has got the packet.  Then, the expected number of packets sent from $c_i$ to $c_j$ is $(1-p_e)p_e p_{i \rightarrow j}$.  To motivate the social grouping with the equal-reciprocal constraint, we need $p_{1 \rightarrow 2}=p_{2\rightarrow 1}$. 

Let $T_{\cup}$ be the completion time to deliver the common interest over the broadcast B2D and D2D if $c_1$ and $c_2$ are incentivized to form the social group,  then
\begin{eqnarray*}
T_{\cup}&=&p_e^2(1+T_{\cup})+(1-p_e)^2\cdot 1+2(1-p_e)p_e\left(p_{1 \rightarrow 2}\cdot 2+(1-p_{1\rightarrow 2})(1+\frac{1}{1-p_e})\right),
\end{eqnarray*}    
when $p_{1 \rightarrow 2}=p_{2\rightarrow 1}$. To minimize the  completion time, we optimally choose $p^*_{1 \rightarrow 2}=p^*_{2\rightarrow 1}=1$, then we have
\begin{eqnarray*}
T^*_{\cup}=\frac{-2p_e^2+2p_e+1}{1-p_e^2}. \label{eq:Ts}
\end{eqnarray*} 

\begin{define}
An \textit{optimal equal-reciprocal} scheme is a sharing policy (i.e., sharing probabilities) that minimizes the completion time subject to the equal-reciprocal constraint. 
\end{define}

We notice that in this symmetric case, the optimal equal-reciprocal  scheme does not lose any performance compared to the \textit{full cooperation} (i.e., both users are always willing to help each other without an incentive scheme). Yet, this is not the case in the asymmetric case, as will be discussed in Section \ref{section:asymmetric}. 

\begin{lemma} \label{lemma:improve-from-social}
The improvement from the social grouping is 
\begin{align*}
R_{=/\cup}:=\frac{T_{=}}{T^*_{\cup}}=\frac{2p_e+1}{-2p_e^2+2p_e+1}.
\end{align*}
In particular, the ratio is 1 when $p_e=0$; the ratio is 1.33 when $p_e=0.5$. 
\end{lemma}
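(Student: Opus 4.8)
The plan is to treat this statement as a direct corollary of the two completion-time expressions already established, since the substantive modeling work --- writing down the one-step renewal recursions and solving them --- was completed in the lead-up to the lemma. I would begin by recalling $T_{=}=\frac{2p_e+1}{1-p_e^2}$ from Eq.~(\ref{eq:Tc}) and $T^*_{\cup}=\frac{-2p_e^2+2p_e+1}{1-p_e^2}$ from the optimized social-grouping recursion. The key observation is that both quantities carry the \emph{same} denominator $1-p_e^2$, so forming the quotient $R_{=/\cup}=T_{=}/T^*_{\cup}$ cancels this common factor immediately and leaves precisely the claimed closed form $\frac{2p_e+1}{-2p_e^2+2p_e+1}$, with no further simplification needed.

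Before invoking $T^*_{\cup}$, I would confirm that the optimizer is genuinely $p^*_{1\to 2}=p^*_{2\to 1}=1$, so that the expression I substitute is legitimately the minimum. To that end I would inspect the bracketed term $p_{1\to 2}\cdot 2+(1-p_{1\to 2})(1+\frac{1}{1-p_e})$ appearing in the recursion for $T_{\cup}$. Under the equal-reciprocal constraint $p_{1\to 2}=p_{2\to 1}$ only one free variable remains, and this term is affine in it with slope $2-\frac{2-p_e}{1-p_e}=\frac{-p_e}{1-p_e}<0$ for $p_e\in(0,1)$. Since it is multiplied by the nonnegative coefficient $2p_e(1-p_e)$ and the remaining terms of the recursion do not depend on $p_{1\to 2}$, minimizing $T_{\cup}$ reduces to minimizing a decreasing affine function over $[0,1]$, whose minimizer is the boundary point $p^*_{1\to 2}=1$. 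Substituting this value and solving the resulting linear equation for $T_{\cup}$ reproduces $T^*_{\cup}$.

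Finally I would verify the two numerical instances by substitution: at $p_e=0$ both numerator and denominator equal $1$, giving $R_{=/\cup}=1$, consistent with the earlier remark that symmetric social grouping yields no gain over plain broadcast when channels are perfect; at $p_e=0.5$ the numerator is $2$ while the denominator is $-\tfrac12+1+1=\tfrac32$, so $R_{=/\cup}=4/3\approx 1.33$.

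Because every ingredient is already available, there is no genuine analytic obstacle here; the statement is essentially a one-line division. The only care required is bookkeeping --- in particular, retaining the $p_e^2$ term that emerges from expanding $p_e^2(1+T_{\cup})$ when collecting $T_{\cup}$ onto one side, since it is exactly this term that produces the numerator $-2p_e^2+2p_e+1$ rather than a spuriously simpler polynomial. I would therefore present this as a short computation rather than a structured argument.
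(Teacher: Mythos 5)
Your proposal is correct and follows exactly the route the paper takes (the paper states the lemma without a separate proof, since it is the immediate quotient of $T_{=}$ from Eq.~(1) and the optimized $T^*_{\cup}$ derived just before the lemma). Your added verification that $p^*_{1\to 2}=p^*_{2\to 1}=1$ is the true minimizer---via the negative slope $\frac{-p_e}{1-p_e}$ of the affine bracketed term---is a small but legitimate tightening of a step the paper asserts without justification.
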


In Lemma  \ref{lemma:improve-from-social}, such an improvement is due to  the local sharing in the D2D network. With the aid of the incentivized social group,  the identified common interest can significantly reduce the completion time even in  terrible B2D communication environment, i.e., $p_e \rightarrow 1$ (the ratio  is 3  when $p_e=1$). 

%
%
%

\subsection{Large symmetric networks} \label{section:large-group}

In this subsection, we are extending the discussion to a large network consisting of a BS and $N$  wireless users $c_1, c_2, \cdots, c_N$. These $N$ users are within the proximity of each other, and desire a common interest. We consider the symmetric users with the same error probability $p_e$.

By $T^{(N)}_{=}$ we denote  the  completion time to deliver the common interest over the broadcast B2D. We present $T^{(N)}_{=}$ in recursion:  $T^{(0)}_{=}=0$, and for all $n=1, \cdots, N$,
\begin{eqnarray}
T^{(n)}_{=}=\sum^{n}_{i=0} {n \choose i} p_e^i(1-p_e)^{N-i}(1+T^{(i)}_{=}), \label{eq:Tnc}
\end{eqnarray}
where $1+T^{(i)}_{=}$ is the expected number of transmissions under condition that $i$ users do not get the packet from the BS's first broadcast.  We show $T^{(N)}_{=}$ in Fig. \ref{fig:Tnc}, where $T^{(N)}_{=}$ increases in $N$ and depends on $p_e$.

\begin{figure}
\centering
\includegraphics[width=.5\textwidth]{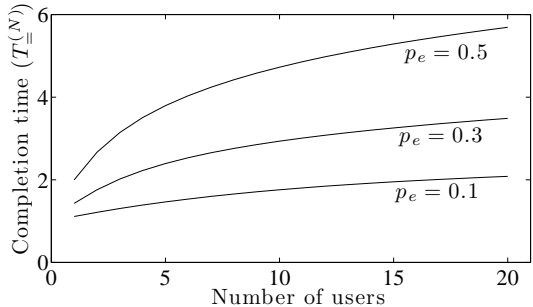}
\caption{Completion time $T^{(N)}_{=}$ (see Eq. (\ref{eq:Tnc})) for large symmetric networks with common interest identification and  no  social grouping.}
\label{fig:Tnc}
\end{figure}

We further exploit the local sharing benefit among users by means of broadcast D2D. We generalize the idea of the equal-reciprocal mechanism  by ensuring that \textit{each pair} in the social group share the same expected amount of content with each other. Due to the broadcast D2D channels among the local users, the analysis of the equal-reciprocal incentive will be different from the tit-for-tat incentive for P2P networks (based on unicast communications).

Let $\mathbf{R}$ be the set of \textit{remaining} users that do not get the packet after the first B2D transmission. 
By $p_{i \rightarrow \mathbf{R}}$ we denote a sharing probability that user $c_i$ shares the packet with  $\mathbf{R}$ when $c_i$ has got the packet. 

\begin{lemma}
The optimal equal reciprocal scheme for large symmetric networks is as follows: to pick a user who is responsible for the broadcast with the equal probability; in particular, we select
\begin{eqnarray}
p^*_{i \rightarrow \mathbf{R}}=\frac{1}{N-|\mathbf{R}|}.\label{eq:large-network}
\end{eqnarray}
\end{lemma}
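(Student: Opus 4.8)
The plan is to first reduce the completion-time minimization to a pure \emph{selection} problem over the users holding the packet, and then verify that the uniform rule both attains that minimum and satisfies the equal-reciprocal constraint. First I would characterize the completion-time--optimal policies. After the first B2D broadcast the remaining set $\mathbf{R}$ is fixed, and since the D2D channels are noiseless and broadcast, a \emph{single} transmission by any user holding the packet delivers it to all of $\mathbf{R}$ in one slot. Whenever $1 \le |\mathbf{R}| \le N-1$ this is clearly optimal: at least one transmission is necessary, one suffices, and routing the repair back through the unreliable B2D link could only be slower. Hence every policy that lets \emph{exactly one} holder broadcast attains the same minimal completion time, so the only remaining freedom is which holder transmits. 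Formally this forces, for each such $\mathbf{R}$,
\begin{equation}
\sum_{i \notin \mathbf{R}} p_{i \to \mathbf{R}} = 1, \nonumber
\end{equation}
and the completion time is identical for every collection of sharing probabilities obeying this constraint. Thus the completion-time objective does not discriminate among single-broadcast schemes, and the equal-reciprocal constraint is what must pin down the policy.

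Next I would write the equal-reciprocal constraint in closed form. The amount of useful content $c_i$ delivers to $c_j$ for the common packet is nonzero only in the realizations where $c_j$ misses the first broadcast ($c_j \in \mathbf{R}$), $c_i$ receives it ($c_i \notin \mathbf{R}$), and $c_i$ is the selected broadcaster. Since a specific remaining set $\mathbf{R}$ occurs with probability $p_e^{|\mathbf{R}|}(1-p_e)^{N-|\mathbf{R}|}$, the expected shared amount is
\begin{equation}
U_{i \to j} = \sum_{\mathbf{R}:\, c_i \notin \mathbf{R},\, c_j \in \mathbf{R}} p_e^{|\mathbf{R}|}(1-p_e)^{N-|\mathbf{R}|}\, p_{i \to \mathbf{R}}, \nonumber
\end{equation}
and the equal-reciprocal requirement is $U_{i \to j} = U_{j \to i}$ for every pair $c_i, c_j$.

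Then I would substitute the candidate $p^*_{i \to \mathbf{R}} = 1/(N - |\mathbf{R}|)$ and check both conditions. Optimality of the completion time is immediate, since $\sum_{i \notin \mathbf{R}} 1/(N-|\mathbf{R}|) = 1$. For equal-reciprocity, observe that $p^*_{i\to\mathbf{R}}$ depends on $\mathbf{R}$ only through $|\mathbf{R}|$. The bijection that in each remaining set exchanges the membership status of $c_i$ and $c_j$ (removing $c_j$ and inserting $c_i$) maps the index set of $U_{i\to j}$ onto that of $U_{j\to i}$ while preserving $|\mathbf{R}|$, hence preserving both the binomial weight and the value of $p^*$; term by term this gives $U_{i\to j} = U_{j\to i}$. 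Therefore the uniform rule simultaneously attains the minimal completion time and satisfies equal-reciprocity, i.e., it is an optimal equal-reciprocal scheme.

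The main obstacle is conceptual rather than computational: because the completion time is \emph{invariant} across all single-broadcast policies, optimality alone cannot single out the uniform rule, so the argument must make explicit that it is the equal-reciprocal constraint doing the selecting. I would be careful to note that the aggregate condition $U_{i\to j}=U_{j\to i}$ does not by itself force $p_{i\to\mathbf{R}}$ to be uniform once $N \ge 3$ (permutation-asymmetric solutions can still balance the pairwise totals), so the uniform rule is singled out only after additionally requiring the policy to be invariant under relabeling of the symmetric users. Stating this symmetry hypothesis cleanly, and confirming that it is the natural reading of ``each pair shares the same expected amount,'' is the delicate step of the proof.
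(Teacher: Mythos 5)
Your proof is correct and follows essentially the same route as the paper's: the paper likewise rests on the two facts that $\sum_{i \notin \mathbf{R}} p^*_{i \rightarrow \mathbf{R}} = 1$ guarantees recovery by all users in the very next slot (hence minimal completion time) and that the uniform rule satisfies the equal-reciprocal constraint for every pair. The paper's proof is only two sentences and asserts the pairwise balance by symmetry without computation, so your explicit formula for $U_{i \rightarrow j}$, the membership-swap bijection, and the caveat that optimality plus equal-reciprocity alone do not force uniformity when $N \ge 3$ are refinements of details the paper leaves implicit rather than a different argument.
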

\begin{proof}
The policy  incentivizes all $N$ users to participate in the social group as the equal-reciprocal constraint holds for all pairs of the local users. Moreover, with the choice of the sharing probability, if there exists a user that has got the packet,  all clients can recover the packet in the next time slot as $\sum_{i \notin \mathbf{R} } p^*_{i \rightarrow \mathbf{R}}=1$ according to Eq. (\ref{eq:large-network}).
\end{proof}

Let $T^{(N)}_{\cup}$ be the  completion time  to deliver the common interest over the broadcast B2D and D2D, associated with the sharing probability in Eq. (\ref{eq:large-network}). Then, we have
\begin{align}
T^{(N)}_{\cup}=&p_e^N(1+T^{(N)}_{\cup})+(1-p_e)^N+2\left(1-p_e^N-(1-p_e)^N\right) \label{eq:Ts-N}\\
=&1+\frac{1-(1-p_e)^N}{1-p_e^N}, \nonumber
\end{align}
where the third term in RHS of Eq. (\ref{eq:Ts-N}) reflects the local content sharing in the group, i.e., one of the  users that has received the packet in the first time slot will broadcast it such that all users in the group can recover the packet.  

Note that $T^{(N)}_{\cup}$ is monotonic in $N$, but can be  increasing or decreasing due to the tradeoff:  
\begin{itemize}
	\item \emph{More users to serve:} As the number of local users increases, the probability that all B2D channels  are ON decreases;
	\item \emph{User diversity increases:} As the number of local users increases, the probability that one user receives the packet at the first transmission increases. 
\end{itemize}

\begin{lemma} \label{theorem:large-symmetric}
When $p_e<0.5$, $T^{(N)}_{\cup}$ increases in $N$. When $p_e>0.5$,  $T^{(N)}_{\cup}$ decreases in $N$. When $p_e=0.5$, $T^{(N)}_{\cup}=2$ for all  $N$. Moreover, $T^{(N)}_{\cup}$ approaches 2 when $N \rightarrow \infty$, independent of $p_e$, while $T^{(N)}_{=}$ is sensitive to $p_e$ as shown in Fig. \ref{fig:Tnc}.  
\end{lemma}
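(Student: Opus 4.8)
The plan is to argue everything directly from the closed form
$$T^{(N)}_{\cup} = 1 + \frac{1 - (1-p_e)^N}{1 - p_e^N}$$
derived in Eq.~(\ref{eq:Ts-N}). I would write $a = p_e$ and $b = 1 - p_e$, so that $a + b = 1$ with $a, b \in (0,1)$, and let $g(N) = \frac{1 - b^N}{1 - a^N}$ denote the only nontrivial part, since $T^{(N)}_{\cup} = 1 + g(N)$. All three claims (the value at $p_e = 0.5$, the limit, and the monotonicity) then reduce to statements about $g$.

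First I would dispose of the easy cases. When $p_e = 0.5$ we have $a = b$, so $g(N) = 1$ and hence $T^{(N)}_{\cup} = 2$ for every $N$. For the limit, since $a, b \in (0,1)$ both $a^N \to 0$ and $b^N \to 0$ as $N \to \infty$, so $g(N) \to \frac{1-0}{1-0} = 1$ and therefore $T^{(N)}_{\cup} \to 2$ irrespective of $p_e$; the contrast with $T^{(N)}_{=}$ in Fig.~\ref{fig:Tnc} is then merely a qualitative remark and needs no further argument.

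The substantive step is monotonicity, which I would establish by examining the forward difference $g(N+1) - g(N)$ placed over the common denominator $(1 - a^{N+1})(1 - a^N)$, which is strictly positive. Expanding the numerator and simplifying using $a - 1 = -b$ and $1 - b = a$ (the identities that hold precisely because $a + b = 1$) collapses it to
$$ab\,(b^{N-1} - a^{N-1}) + a^N b^N (b - a).$$
From this factored form the sign is transparent: when $p_e < 0.5$ one has $a < b$, so both summands are nonnegative with the second strictly positive, whence $g$ and thus $T^{(N)}_{\cup}$ strictly increases in $N$; when $p_e > 0.5$ one has $a > b$, both summands are nonpositive with the second strictly negative, so $T^{(N)}_{\cup}$ strictly decreases. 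Since the sign of the difference does not depend on $N$ for fixed $p_e$, this simultaneously shows that $T^{(N)}_{\cup}$ is monotonic (never reversing direction), as asserted.

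The only real work, and hence the main obstacle, is the algebraic simplification of this difference numerator; the crucial observation that makes the argument clean is that the constraint $a + b = 1$ lets one rewrite it as a sum of two terms whose signs agree and are both governed by $b - a$, so that the threshold $p_e = 0.5$ emerges automatically. Once that factorization is in hand, every claim in the lemma follows by inspection.
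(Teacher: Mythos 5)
Your proposal is correct and follows the only natural route, which is also the paper's (implicit) one: the paper states this lemma without a written proof, relying directly on the closed form $T^{(N)}_{\cup}=1+\frac{1-(1-p_e)^N}{1-p_e^N}$ in Eq.~(\ref{eq:Ts-N}), and your argument simply makes that rigorous. In particular, your difference-numerator factorization $ab\,(b^{N-1}-a^{N-1})+a^Nb^N(b-a)$ checks out algebraically, and the sign analysis, the $p_e=0.5$ case, and the limit are all handled correctly.
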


When the B2D channels are quite unreliable (i.e., $p_e > 0.5$), a larger social group can better use the user diversity advantage and shorten the  completion time.

\section{General asymmetric networks} \label{section:asymmetric}
Thus far,  the optimal equal-reciprocal scheme can satisfy all symmetric users in the next time slot once one user has got the packet.  In this section, we analyze the optimal equal-reciprocal  scheme for asymmetric users.

\subsection{Two asymmetric users}
We first reconsider the simple network in Fig. \ref{fig:two-device}, but the channel between the BS and each user $c_i$ has an error probability $p_{e,i}$. Without loss of generality, we assume that $p_{e,1} < p_{e,2}$. 

\subsubsection{No content sharing}
If $c_1$ and $c_2$ do not share any content,  the  completion time for delivering the common interest over broadcast B2D is
\begin{eqnarray} 
T_{=}&=&\frac{1}{1-p_{e,1}p_{e,2}}\Bigl[p_{e,1}p_{e,2}+(1-p_{e,1})(1-p_{e,2})+ \nonumber\\
&&p_{e,1}(1-p_{e,2})\left(1+\frac{1}{1-p_{e,1}}\right)+ p_{e,2}(1-p_{e,1})\left(1+\frac{1}{1-p_{e,2}}\right)\Bigr]. \label{eq:no-info-share}
\end{eqnarray}

\subsubsection{Full cooperation}
We consider that $c_1$ and $c_2$ are  \textit{always} help each other. We refer to this case as a  \textit{full cooperation}. We  use this case as a benchmark for later comparison purpose with another case. Then, the  completion time $T_{f}$ is
\begin{eqnarray*}
T_f&=&\frac{1}{1-p_{e,1}p_{e,2}}\Bigl[p_{e,1}p_{e,2}+(1-p_{e,1})(1-p_{e,2})+ p_{e,1}(1-p_{e,2})\cdot 2+ p_{e,2}(1-p_{e,1})\cdot 2\Bigr],
\end{eqnarray*}
where the last two terms indicate the full cooperation: if either one receives the packet, it will share with the other one, resulting in totally two time slots. 

With the full cooperation, the expected amount of packets sent from $c_i$ to $c_j$ is $(1-p_{e,i})p_{e,j}p_{i \rightarrow j}$. As $p_{e,1} < p_{e,2}$, if $c_1$ and $c_2$ always help  each other, i.e., $p_{1 \rightarrow 2}=p_{2 \rightarrow 1}=1$, then 
$$(1-p_{e,2})p_{e,1}p_{2 \rightarrow 1} < (1-p_{e,1})p_{e,2}p_{1 \rightarrow 2}.$$

However, if $c_1$ and $c_2$ can be selfish in general, the full cooperation does not result in a fair situation as $c_1$ needs to share more packets with $c_2$. Thus, it is difficult for the selfish $c_1$ and $c_2$ to  establish the full cooperation.

\subsubsection{Equal-reciprocal incentive}
To motivate the selfish $c_1$ and $c_2$ to form a social group, we apply the proposed equal-reciprocal incentive. Using the equal-reciprocal constraint as an incentive, we  fairly choose a unique sharing probability $p_{i \rightarrow j}$ such that
$$(1-p_{e,2})p_{e,1}p_{2 \rightarrow 1} = (1-p_{e,1})p_{e,2}p_{1 \rightarrow 2}.$$
To minimize the completion time, we optimally select 
\begin{eqnarray}
p^*_{1 \rightarrow 2} = \frac{(1-p_{e,2})p_{e,1}}{(1-p_{e,1})p_{e,2}} < 1;  \hspace{.5cm} p^*_{2 \rightarrow 1} = 1. \label{eq:p12}
\end{eqnarray}
The intuition is that $c_2$ is associated with a worse B2D channel, so in  the long-run  it  get fewer packets from the BS than $c_1$. Hence, to minimize the completion time, it is the best that $c_2$ always share  content, while $c_1$ share partial content with the same amount as $c_2$.  
Then, the optimal completion time $T^*_{\cup}$ in the presence of the incentivized social group is
\begin{align}
T^*_{\cup}=&\frac{1}{1-p_{e,1}p_{e,2}}\Bigl\{p_{e,1}p_{e,2}+(1-p_{e,1})(1-p_{e,2})+ (1-p_{e,2})p_{e,1}\cdot 2\nonumber\\
&+ (1-p_{e,1})p_{e,2}\cdot \Bigl[\frac{(1-p_{e,2})p_{e,1}}{(1-p_{e,1})p_{e,2}}\cdot 2+ \frac{p_{e,2}-p_{e,1}}{(1-p_{e,1})p_{e,2}}(1+\frac{1}{1-p_{e,2}})\Bigr]\Bigr\}.\label{eq:Ts-nonsymmetric}
\end{align}

By comparing $T^*_{\cup}$ and $T_f$, we notice the performance loss from the benchmark case due to the  user asymmetry and  selfishness. We then conclude in the following lemma.  

\begin{lemma}
Under our optimal equal-reciprocal scheme, to motivate the asymmetric $c_1$ and $c_2$ to form the social group has a shorter  completion time, i.e., $T^*_{\cup} \leq T_{=}$. Moreover,  the incentivized social group has a performance loss compared with the full cooperation,  i.e.,  
\begin{align*}
T^*_{\cup}-T_f=\frac{\Delta}{1-(p_{e,2}-\Delta)p_{e,2}}(\frac{1}{1-p_{e,2}}-1),
\end{align*}
where $\Delta=p_{e,2}-p_{e,1}$.
\end{lemma}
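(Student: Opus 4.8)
The plan is to prove the chain $T_f\le T^*_\cup\le T_=$ by handling the two asserted relations separately: the inequality by a feasibility argument, and the exact expression by direct simplification.

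For $T^*_\cup\le T_=$, I would avoid algebra with a feasibility observation. The no-sharing policy $p_{1\to2}=p_{2\to1}=0$ satisfies the equal-reciprocal constraint $(1-p_{e,2})p_{e,1}p_{2\to1}=(1-p_{e,1})p_{e,2}p_{1\to2}$ trivially (both sides vanish), and its completion time is exactly $T_=$ of Eq. (\ref{eq:no-info-share}). Since $T^*_\cup$ is by definition the minimum completion time over all equal-reciprocal policies, and the no-sharing policy lies in this feasible set, $T^*_\cup\le T_=$ is immediate. Should a self-contained algebraic check be wanted, I would subtract the numerators (the prefactor $1/(1-p_{e,1}p_{e,2})$ and the two terms $p_{e,1}p_{e,2}+(1-p_{e,1})(1-p_{e,2})$ are common to both) and reduce the remainder to $p_{e,1}\bigl(2p_{e,2}-\frac{p_{e,2}-p_{e,1}}{1-p_{e,1}}\bigr)$, which is nonnegative since $2p_{e,2}(1-p_{e,1})-(p_{e,2}-p_{e,1})=p_{e,1}(1-p_{e,2})+p_{e,2}(1-p_{e,1})\ge 0$.

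For the identity $T^*_\cup-T_f$, I would compute the difference directly. Both quantities carry the prefactor $1/(1-p_{e,1}p_{e,2})$ and share their first two terms, so only the last two terms matter. The key step is to insert $p^*_{1\to2}=\frac{(1-p_{e,2})p_{e,1}}{(1-p_{e,1})p_{e,2}}$ from Eq. (\ref{eq:p12}) into the bracket of Eq. (\ref{eq:Ts-nonsymmetric}): the factor $(1-p_{e,1})p_{e,2}$ cancels, the first summand becomes $2(1-p_{e,2})p_{e,1}$, and the residual weight $(1-p^*_{1\to2})(1-p_{e,1})p_{e,2}$ collapses to $p_{e,2}-p_{e,1}$ via the identity $(1-p_{e,1})p_{e,2}-(1-p_{e,2})p_{e,1}=p_{e,2}-p_{e,1}$. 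The numerator of $T^*_\cup$ then reads $4p_{e,1}(1-p_{e,2})+(p_{e,2}-p_{e,1})\bigl(1+\frac{1}{1-p_{e,2}}\bigr)$. Subtracting the full-cooperation numerator $2p_{e,1}(1-p_{e,2})+2p_{e,2}(1-p_{e,1})$ and using $2p_{e,1}(1-p_{e,2})-2p_{e,2}(1-p_{e,1})=-2(p_{e,2}-p_{e,1})$ lets me factor out $p_{e,2}-p_{e,1}$, leaving $(p_{e,2}-p_{e,1})\bigl(\frac{1}{1-p_{e,2}}-1\bigr)$. Finally, setting $\Delta=p_{e,2}-p_{e,1}$ so that $p_{e,1}=p_{e,2}-\Delta$ rewrites the prefactor denominator as $1-(p_{e,2}-\Delta)p_{e,2}$, yielding exactly the stated expression.

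The main obstacle is purely the bookkeeping in the second part: the bracket of Eq. (\ref{eq:Ts-nonsymmetric}) entangles the optimal sharing probability with the residual $1+\frac{1}{1-p_{e,2}}$ penalty, and one must track that both ``$\cdot 2$'' contributions (the guaranteed share from $c_2$ and the fraction of shares from $c_1$) combine before the common factor $p_{e,2}-p_{e,1}$ can be extracted and $\frac{1}{1-p_{e,2}}-1$ isolated. No conceptual difficulty remains once that cancellation is carried out correctly.
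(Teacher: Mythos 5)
Your proposal is correct: the feasibility argument (no sharing, $p_{1\to2}=p_{2\to1}=0$, satisfies the equal-reciprocal constraint and reproduces $T_=$, so the optimum can only be smaller) and the direct cancellation giving $T^*_{\cup}-T_f=\frac{\Delta}{1-(p_{e,2}-\Delta)p_{e,2}}\bigl(\frac{1}{1-p_{e,2}}-1\bigr)$ both check out against Eqs. (\ref{eq:no-info-share}), (\ref{eq:p12}), and (\ref{eq:Ts-nonsymmetric}). The paper states this lemma without an explicit proof, treating it as an immediate consequence of the just-derived expressions, and your verification is essentially that same route, with the feasibility observation serving as a clean shortcut for the inequality part.
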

Taking the derivative to the above equation, we get  
\begin{align*}
\frac{d}{d\Delta}(T^*_{\cup}-T_f)=\frac{1-p_{e,2}^2}{(p_{e,2}\Delta+1-p_{e,2}^2)^2},
\end{align*}
and remark that 
\begin{itemize}
	\item $\frac{d}{d\Delta}(T^*_{\cup}-T_f) >0$; hence, $T^*_{\cup}-T_f$ increases in $\Delta$; 
	\item the rate $\frac{d}{d\Delta}(T^*_{\cup}-T_f)$ decreases as the difference $\Delta$ increases. 
\end{itemize}

\subsection{Large asymmetric networks} \label{subsection:large-asymmetric}

%
%

We focus on a network consisting of three users $c_1$, $c_2$, and $c_3$ with the channel error probabilities $p_{e,1}$, $p_{e,2}$, and $p_{e,3}$, respectively, while a more general network can be easily extended. Without loss of generality, we assume that $p_{e,1} \leq p_{e,2} \leq p_{e,3}$.

We remind that $p_{i \rightarrow \mathbf{R}}$ is the sharing probability of $c_i$ with a user set $\mathbf{R}$. According to following eight cases associated with all possible channel states $\mathbf{s}(t)=(s_1(t), s_2(t), s_3(t))$ at time $t=1$,  we work out  the  completion time $T_{\cup}$ for delivering the common interest with social grouping of $c_1, c_2, c_3$. Let $T_i$ be the completion time for  case $i$.

\begin{enumerate}

\item If $\mathbf{s}(1)=(0,1,1)$: the completion time  $T_1=2$, where we select $p_{2 \rightarrow 1}$ and $p_{3 \rightarrow 1}$ such that $p_{2 \rightarrow 1}+p_{3 \rightarrow 1}=1$. Similar to Eq. (\ref{eq:p12}), the selection is optimal.

\item  If $\mathbf{s}(1)=(1,0,1)$: the completion time  $T_2=2$, where we select that $p_{1 \rightarrow 2}+p_{3 \rightarrow 2}=1$. 

\item  If $\mathbf{s}(1)=(1,1,0)$: the  completion time  
$$T_3=(p_{1 \rightarrow 3}+p_{2 \rightarrow 3}) \cdot 2+ (1-p_{1 \rightarrow 3}-p_{2 \rightarrow 3})(1+\frac{1}{1-p_{e,3}}),$$ 
where the first term results from the content sharing in the social group, while the second one is due to the re-transmissions from the BS.

\item  If $\mathbf{s}(1)=(0,0,1)$: the  completion time  $T_4=2$, where we select that $p_{3 \rightarrow 1,2}=1$ due to the worst channel condition.  

 \begin{figure*}[!b]
 \hrulefill
\begin{align}
&T_5=p_{2 \rightarrow 1,3} \cdot 2+ (1-p_{2 \rightarrow 1,3})\Biggl[1+ \frac{1}{1-p_{e,1}p_{e,3}} \Bigl[ p_{e,1}p_{e,3}+(1-p_{e,1})(1-p_{e,3})+2p_{e,1}(1-p_{e,3})+\nonumber\\
&\hspace{5cm}  p_{e,3}(1-p_{e,1})\bigl[\frac{2(1-p_{e,3})p_{e,1}}{(1-p_{e,1})p_{e,3}}+ \frac{p_{e,3}-p_{e,1}}{(1-p_{e,1})p_{e,3}}(1+\frac{1}{1-p_{e,3}})\bigr]\Bigr]\Biggr].
\label{eq:social-case5}
\end{align}
\end{figure*}

\item If $\mathbf{s}(1)=(0,1,0)$: the  completion time is given in Eq. (\ref{eq:social-case5}), where the first term implies that $c_2$ broadcasts the packet to $c_1$ and $c_3$, while the remaining terms are because $c_2$ does not  share the content but $c_1$ and $c_3$ help  each other with some probability as in Eq. (\ref{eq:Ts-nonsymmetric}).

\item If $\mathbf{s}(1)=(1,0,0)$: the  completion time $T_6$ is similar to Eq. (\ref{eq:social-case5}). 

\item If $\mathbf{s}(1)=(1,1,1)$: the  completion time  $T_7=1$.

\item  If $\mathbf{s}(1)=(0,0,0)$: the  completion time  $T_8=T_{\cup}$.

\end{enumerate}

It is easy to evaluate the probability $\mathbb{P}(\text{case\,\,} i)$ of case $i$; then, the completion time is
\begin{align}
T_{\cup}=\frac{\mathbb{P}(\text{case\,\,} 8)+\sum_{i=1}^7 \mathbb{P}(\text{case\,\,} i) T_i}{1-p_{e,1}p_{e,2}p_{e,3}}. \label{eq:three-node-t}
\end{align}

We note that the  completion time $T_{\cup}$ is a linear function of the sharing probabilities, and hence conclude as follows.  

\begin{theorem}\label{theorem:linear-program}
We can get the optimal  completion time $T^*_{\cup}$ and  the optimal equal-reciprocal scheme (i.e., the optimal sharing probabilities) by formulating a \textit{linear program} as follows. 

\textbf{Linear program}:
\begin{align*}
&\min \frac{\mathbb{P}(\text{case\,\,} 8)+\sum_{i=1}^7 \mathbb{P}(\text{case\,\,} i) T_i}{1-p_{e,1}p_{e,2}p_{e,3}}\\
&\text{Subject to the following constraints:} 
\end{align*} 

(\textit{\textbf{Const. 1}}) Equal-reciprocal constraint between $c_1$ and $c_2$:
\begin{eqnarray*}
&&(1-p_{e,1})p_{e,2}\left[p_{e,3}\cdot p_{1 \rightarrow 2,3}+(1-p_{e,3}) p_{1 \rightarrow 2}\right] = p_{e,1}(1-p_{e,2})\left[p_{e,3}\cdot p_{2 \rightarrow 1,3}+(1-p_{e,3}) p_{2 \rightarrow 1}\right].
\end{eqnarray*}

(\textit{\textbf{Const. 2}}) Equal-reciprocal constraint between $c_1$ and $c_3$:
\begin{eqnarray*}
(1-p_{e,1})p_{e,3}\left[p_{e,2}\cdot p_{1 \rightarrow 2,3}+(1-p_{e,2}) p_{1 \rightarrow 3}\right] = p_{e,1}(1-p_{e,3})\left[p_{e,2}\cdot p_{3 \rightarrow 1,2}+(1-p_{e,2}) p_{3 \rightarrow 1}\right].
\end{eqnarray*}

(\textit{\textbf{Const. 3}}) Equal-reciprocal constraint between $c_2$ and $c_3$:
\begin{eqnarray*}
(1-p_{e,2})p_{e,3}\left[p_{e,1}\cdot p_{2 \rightarrow 1,3}+(1-p_{e,1}) p_{2 \rightarrow3}\right] = p_{e,3}(1-p_{e,2})\left[p_{e,1}\cdot p_{3 \rightarrow 1,2}+(1-p_{e,1}) p_{3 \rightarrow 2}\right].
\end{eqnarray*}

(\textit{\textbf{Const. 4}}) For the cases 1, 2, and 3, the total sharing probability to $c_i$ is no more than one for all $i$:
\begin{align*}
 p_{2 \rightarrow 1}+p_{3 \rightarrow 1} = 1; \hspace{1cm} p_{1\rightarrow 2}+p_{3 \rightarrow 2} =1; \hspace{1cm} p_{1 \rightarrow 3}+p_{2 \rightarrow 3} \leq 1. 
\end{align*}

(\textit{\textbf{Const. 5}}) All sharing probabilities are less than or equal to one.
\end{theorem}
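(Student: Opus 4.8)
The plan is to show that, once the decision variables are taken to be the sharing probabilities $p_{i\rightarrow\mathbf{R}}$ appearing in cases 1--6, both the objective $T_{\cup}$ and every constraint are affine in these variables, so that the optimization in the definition of the optimal equal-reciprocal scheme is \emph{literally} a linear program. The starting observation is that each error probability $p_{e,i}$, each case probability $\mathbb{P}(\text{case } i)$, and the denominator $1-p_{e,1}p_{e,2}p_{e,3}$ in Eq.~(\ref{eq:three-node-t}) is a constant with respect to the decision variables.

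First I would verify that each per-case completion time $T_i$ is affine. Cases 1, 2, 4, 7 contribute the constants $T_1=T_2=T_4=2$ and $T_7=1$ (where the single-holder choices such as $p_{3\rightarrow 1,2}=1$ are the obviously optimal ones and can be pinned); $T_3$ is affine in $p_{1\rightarrow 3}$ and $p_{2\rightarrow 3}$; and from Eq.~(\ref{eq:social-case5}), $T_5$ has the form $2\,p_{2\rightarrow 1,3}+(1-p_{2\rightarrow 1,3})C$, with $T_6$ symmetric in $p_{1\rightarrow 2,3}$. The essential point is that the bracketed continuation inside $T_5$ and $T_6$ --- the event that the lone holder declines to share and the BS retransmits until the two remaining users recover through the two-user scheme of Eq.~(\ref{eq:p12}) --- already has the two-user optimal values substituted, so it is a pure constant $C$ built from the $p_{e,i}$ and contributes no further decision variable. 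Consequently each $T_i$ is affine, and since the recursive term $T_8=T_{\cup}$ has already been collected on the left and divided out (which is exactly what produces the constant denominator), Eq.~(\ref{eq:three-node-t}) displays $T_{\cup}$ as a constant-weighted affine combination, hence affine.

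Next I would check that the feasible region is exactly the set of valid equal-reciprocal policies and is cut out by linear constraints. Reading each side of Const.~1--3 as an expected number of D2D packets sent between an ordered pair, one sums over the first-slot channel states in which $c_i$ holds the packet while $c_j$ does not, each weighted by the relevant sharing probability; this is a linear form with coefficients made only of the constants $p_{e,i}$, so equating the two directions gives the displayed linear equalities, which faithfully encode equal-reciprocity. Const.~4 (two linear equalities and one linear inequality expressing single-slot recovery of the remaining user in cases 1, 2, 3) and Const.~5 together with nonnegativity (the box $0\leq p_{i\rightarrow\mathbf{R}}\leq 1$) are plainly linear. Thus minimizing the affine $T_{\cup}$ over this polytope is a linear program whose feasible set coincides with the admissible equal-reciprocal sharing policies and whose objective is the completion time; by the definition of the optimal equal-reciprocal scheme, an optimal LP solution yields both $T^*_{\cup}$ and the optimal sharing probabilities, and existence of a minimizer follows from closedness and boundedness of the polytope together with its nonemptiness (checked below).

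The step I expect to be the main obstacle is the affineness verification for $T_5$ and $T_6$: one must be sure that the nested continuation after a refused share contributes only constants, so that no product of two decision variables ever appears and the recursion never reintroduces $T_{\cup}$ nonlinearly. A secondary care point is nonemptiness of the feasible polytope, since the equalities in Const.~4 force the shares in cases 1 and 2 to sum to one and thereby exclude the trivial all-zero assignment; exhibiting one explicit admissible assignment settles this.
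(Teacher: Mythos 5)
Your proposal is correct and takes essentially the same route as the paper: the paper's entire justification is the remark preceding the theorem that $T_{\cup}$ in Eq.~(\ref{eq:three-node-t}) is a linear function of the sharing probabilities (the case-5 and case-6 continuations being pure constants because the two-user optimal values from Eq.~(\ref{eq:p12}) are already substituted), so that minimizing it subject to Const.~1--5 is a linear program. Your case-by-case affineness verification and the compactness/nonemptiness remarks simply make that one-line observation explicit.
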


We remark that the linearity of the  completion time still holds for  $N$ users; as such, the optimal  completion time for the $N$ users can be derived, though complicated, by recursively solving the linear program for $i$ users (poly-time solvable), where $1 \leq i \leq N$. We want to emphasize that  Sections \ref{section:two-devices} and \ref{section:asymmetric} work on the performance analysis of the equal-reciprocal incentive for a common interest, whereas we will propose on-line algorithms selecting a user to share at a time, without the complex computation of the sharing probabilities.

\begin{figure}
\centering
\includegraphics[width=.5\textwidth]{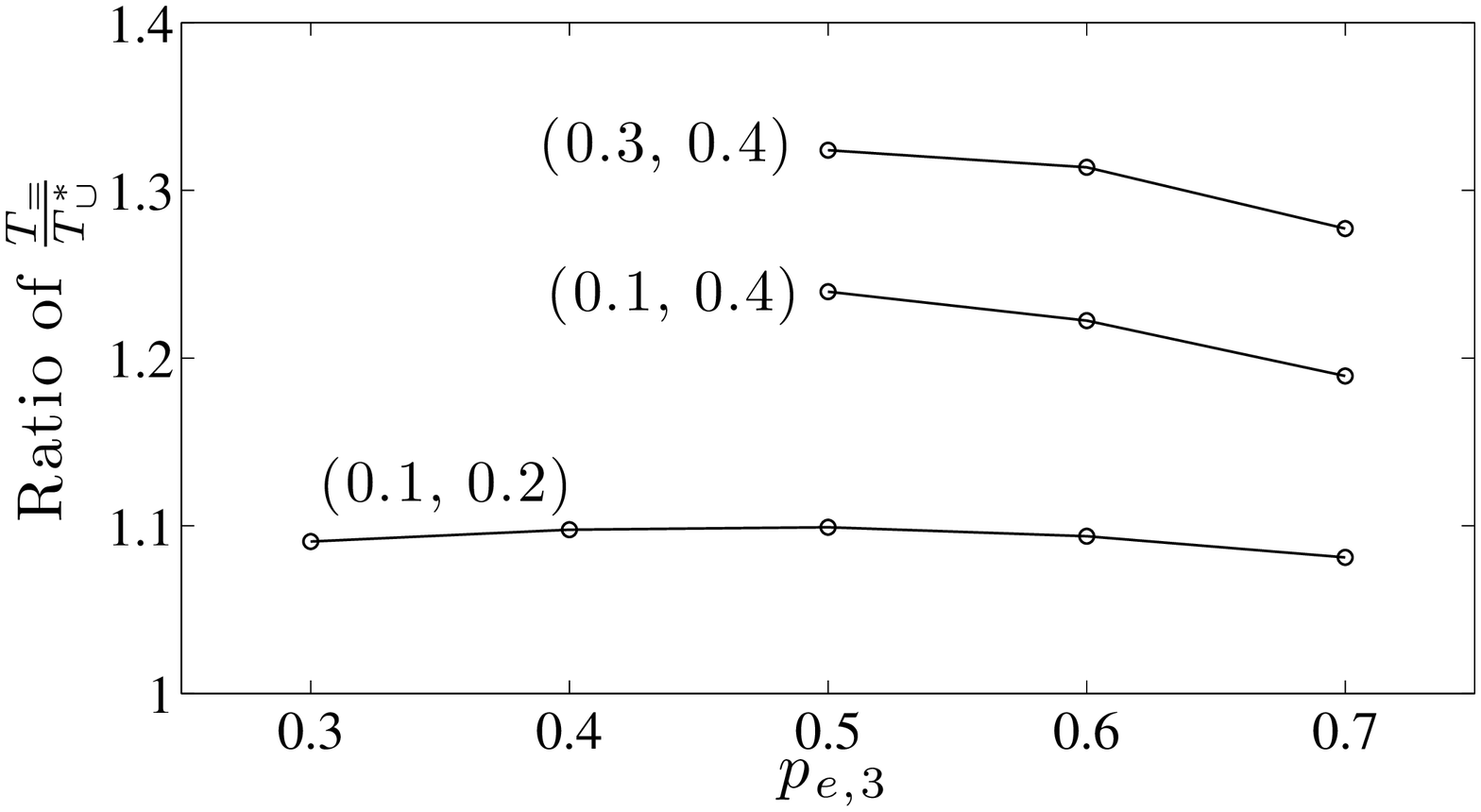}
\caption{The ratio of $T_{=}/T^*_{\cup}$ for various pairs of $(p_{e,1}, p_{e,2})$.}
\label{fig:three-node-result}
\end{figure}

As  there is no closed-form solution for the linear program, we  numerically demonstrate the improvement ratio of $T_{=}/T^*_{\cup}$ in Fig. \ref{fig:three-node-result}, where  each curve is  for a particular pair of $(p_{e,1}, p_{e,2})$ and starts with $p_{e,3} > p_{e,2}$. 

Thus far, the sharing probabilities are chosen to motivate \textit{all} neighboring users to join in the social group. However, we have understood that adding a user with a poor B2D channel will reduce sharing opportunity in the group. Hence, \textit{to incentivize a user or not} is a question.  We conclude as follows.

\begin{theorem}
To incentivize all neighboring users to form a social group minimizes the completion time, no matter what the distribution of B2D channel errors is. 
\end{theorem}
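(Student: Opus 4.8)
The plan is to prove the stronger monotonicity statement that \emph{enlarging} the social group never increases the optimal completion time, and then to conclude by induction on the group size. Write $T^*_{\cup}(S)$ for the optimal equal-reciprocal completion time to all $N$ users when the users in $S$ share over D2D and every other user is served by B2D alone. Since any social group is obtained from the empty set by inserting one user at a time, it suffices to fix an arbitrary group $S$ and an excluded user $c_k$ and to prove $T^*_{\cup}(S\cup\{c_k\})\le T^*_{\cup}(S)$; induction then forces the full group $\{c_1,\dots,c_N\}$ to be the minimizer, for \emph{any} error distribution. Because the all-OFF retransmission factor $1-\prod_{i=1}^N p_{e,i}$ in the completion-time expression (cf.\ Eq.~(\ref{eq:three-node-t})) is the same for every group, I would reduce the comparison to the expected time needed to flush the packet to all users once at least one B2D channel is ON.

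First I would fix a single sample path of the i.i.d.\ channel states and use it in both the $S$-system and the $(S\cup\{c_k\})$-system. Starting from the optimal $S$-scheme, I would build a feasible scheme for $S\cup\{c_k\}$ that (i) issues a D2D broadcast in exactly those realizations where the $S$-scheme does, and (ii) \emph{additionally} lets $c_k$ broadcast in realizations where $c_k$ holds the packet. Because a noiseless D2D transmission is overheard by \emph{every} user still lacking the packet, each $c_i\in S$ is served no later than under the $S$-scheme (the same broadcasts occur, and $c_k$'s extra broadcasts only add coverage), while $c_k$, now a group member, picks up D2D help and is therefore served no later than its B2D-only time in the $S$-system. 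Taking the maximum over users yields completion time $\le T^*_{\cup}(S)$ on every sample path, hence in expectation, giving the desired inequality \emph{provided} the constructed scheme is equal-reciprocal.

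The hard part will be to satisfy the new equal-reciprocal constraints, one per pair $(c_i,c_k)$, without destroying this pathwise dominance. Inflow from $c_i$ to $c_k$ is generated only in states where $c_k$ lacks the packet and $c_i$ is the chosen broadcaster, whereas outflow from $c_k$ to $c_i$ is generated only in states where $c_k$ holds the packet and $c_i$ lacks it; the broadcast nature means one transmission serves all lackers at once, so \emph{which} holder transmits is a free choice that re-attributes flow among pairs without changing who is served. I would exploit this freedom in two ways: (a) raise $c_k$'s own broadcast probability in holding-states, which only adds coverage and never hurts dominance, and (b) whenever the inflow still exceeds what $c_k$ can reciprocate, re-route the mandatory broadcast in $c_k$'s lacking-states onto a \emph{worse-channel} holder, shifting the inflow onto a pair with a larger reciprocity budget. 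The crux is to show this linear flow-balancing system is always feasible under the ordering $p_{e,1}\le\cdots\le p_{e,N}$: the maximal outflow from $c_k$ to $c_i$, proportional to $(1-p_{e,k})p_{e,i}$, together with the re-attribution freedom, must dominate the required inflow, which is at most $p_{e,k}(1-p_{e,i})$. This is exactly the two-user balance behind Eq.~(\ref{eq:p12}) lifted to the multi-user broadcast setting, and it is where I expect the real work to lie; the feasibility of the associated linear program (Theorem~\ref{theorem:linear-program}) supplies the template, and the coupling above converts that feasibility into $T^*_{\cup}(S\cup\{c_k\})\le T^*_{\cup}(S)$.
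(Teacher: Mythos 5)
There is a genuine gap, and it sits exactly where you located the ``real work'': the pathwise dominance of your second paragraph and the equal-reciprocal constraints of your third paragraph cannot be satisfied simultaneously, precisely in the regime the theorem is really about (a newcomer $c_k$ whose channel is the worst). If you keep every D2D broadcast of the optimal $S$-scheme, then each such broadcast occurring while $c_k$ lacks the packet generates inflow to $c_k$; summing the pair constraints over $i\in S$, this total inflow must equal $c_k$'s total outflow, which is capped by the probability that $c_k$ holds the packet while some member of $S$ lacks it. Re-attributing \emph{which} holder transmits only moves flow between pairs and cannot repair this aggregate deficit. Quantitatively, the domination you assert you need, namely that $(1-p_{e,k})p_{e,i}$ (plus re-attribution) covers an inflow of order $p_{e,k}(1-p_{e,i})$, is \emph{reversed} when $p_{e,k}\ge p_{e,i}$: take $S$ to be two symmetric users with small $p_e$ and let $p_{e,k}\to 1$; the inflow generated by preserving the $S$-broadcasts is about $2p_e$, while $c_k$'s outflow capacity is about $2p_e(1-p_{e,k})\to 0$. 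Hence equal-reciprocity forces you to throttle exactly those broadcasts that also serve members of $S$ --- this is the ``decrease of sharing probability'' effect the paper itself identifies, already visible in Eq.~(\ref{eq:p12}) where the better user shares only partially --- and once broadcasts are throttled, some members of $S$ are served strictly later on some sample paths. So $T^*_{\cup}(S\cup\{c_k\})\le T^*_{\cup}(S)$ cannot be established pathwise; it can only hold in expectation, through a tradeoff between diversity gain and throttling loss that your coupling never quantifies.

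The paper's proof (Appendix~\ref{section:to-incentivize}) works entirely at the expectation level and avoids this obstruction. It compares $G_1=((c_1,\dots,c_N),c_{N+1})$ with $G_2=(c_1,\dots,c_{N+1})$ --- so your one-user-at-a-time induction matches its framing --- but for $p_{e,N+1}\ge 0.5$ it introduces an auxiliary \emph{symmetric} group $\hat{G}$ in which every user has error probability $p_{e,N+1}$ and chains three inequalities: $T^*_{G_1}\ge T_{c_{N+1}}$ (the excluded user must still be served by B2D alone), $T_{c_{N+1}}\ge T^*_{\hat{G}}$ (by Lemma~\ref{theorem:large-symmetric}, the symmetric completion time decreases in group size when $p_e>0.5$), and $T^*_{\hat{G}}\ge T^*_{G_2}$ (every channel in $\hat{G}$ is worse than the corresponding one in $G_2$). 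For $p_{e,N+1}<0.5$ it argues that the sharing probabilities in $G_2$ only increase as $p_{e,N+1}$ decreases, while those in $G_1$ are unaffected, so the comparison at $p_{e,N+1}=0.5$ persists. To salvage your route you would have to replace pathwise dominance with an argument that the expected gain from serving $c_k$ via D2D outweighs the expected delay inflicted on $S$ by throttling --- which is no longer a coupling argument but essentially the paper's.
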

\begin{proof}
See Appendix \ref{section:to-incentivize}.
\end{proof}

%

%
%
%

\section{On-line scheduling for dynamic content arrivals} \label{section:dyanmic}

\begin{figure}
\centering
\includegraphics[width=.4\textwidth]{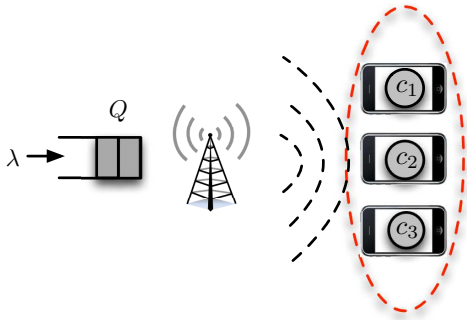}
\caption{Example queueing network}
\label{fig:dyanmics}
\end{figure}

As we considered single common interest before,  we did not need to schedule transmissions for lots of packets. 
In this section, we further consider  time-varying packet arrivals at  the BS, and aim at designing \textit{on-line transmission scheduling} algorithms that decide who (i.e., the BS or users) should transmit content for each time. Based on the algorithm, the BS will adaptively select a user to share received content (if needed) such that the equal-reciprocal incentive is met. As the algorithm is on-line fashion, our incentive design does not need to pre-compute the sharing probabilities in advance and would be effective to keep communications in timely fashion. Moreover, by simply using additional one bit to indicate sharing or not, the users can be informed. In summary, our algorithm is efficient in both time and space.

Different from the discussions before, networks in this section include \textit{queues} in the BS to store content that will be delivered to users.  For a clear explanation,  we present our design in the context of a queueing network in Fig. \ref{fig:dyanmics}, where we consider three asymmetric users. We remark that our design can be easily extended for any number of users, which will be discussed later. We consider the ON/OFF B2D channels and the perfect D2D channels first, while correlated B2D channels and imperfect D2D transmissions are discussed in Section \ref{section:further-discussion}.

In addition to the time-varying channels, we  start to consider  time-varying packet arrivals, where  packets in a queue $Q$ are  common interests for $c_1$, $c_2$, and $c_3$. The arrival process is assumed to be i.i.d. over time.   Let $\lambda$ be the packet arrival rate to the queue $Q$.  We say that the queue is \textit{stable} for $\lambda$ (see \cite{neely:book}) if the time average queue length is finite when the arrival rate at the BS is $\lambda$.

\begin{figure*}[!t]
\centering
\includegraphics[width=.8\textwidth]{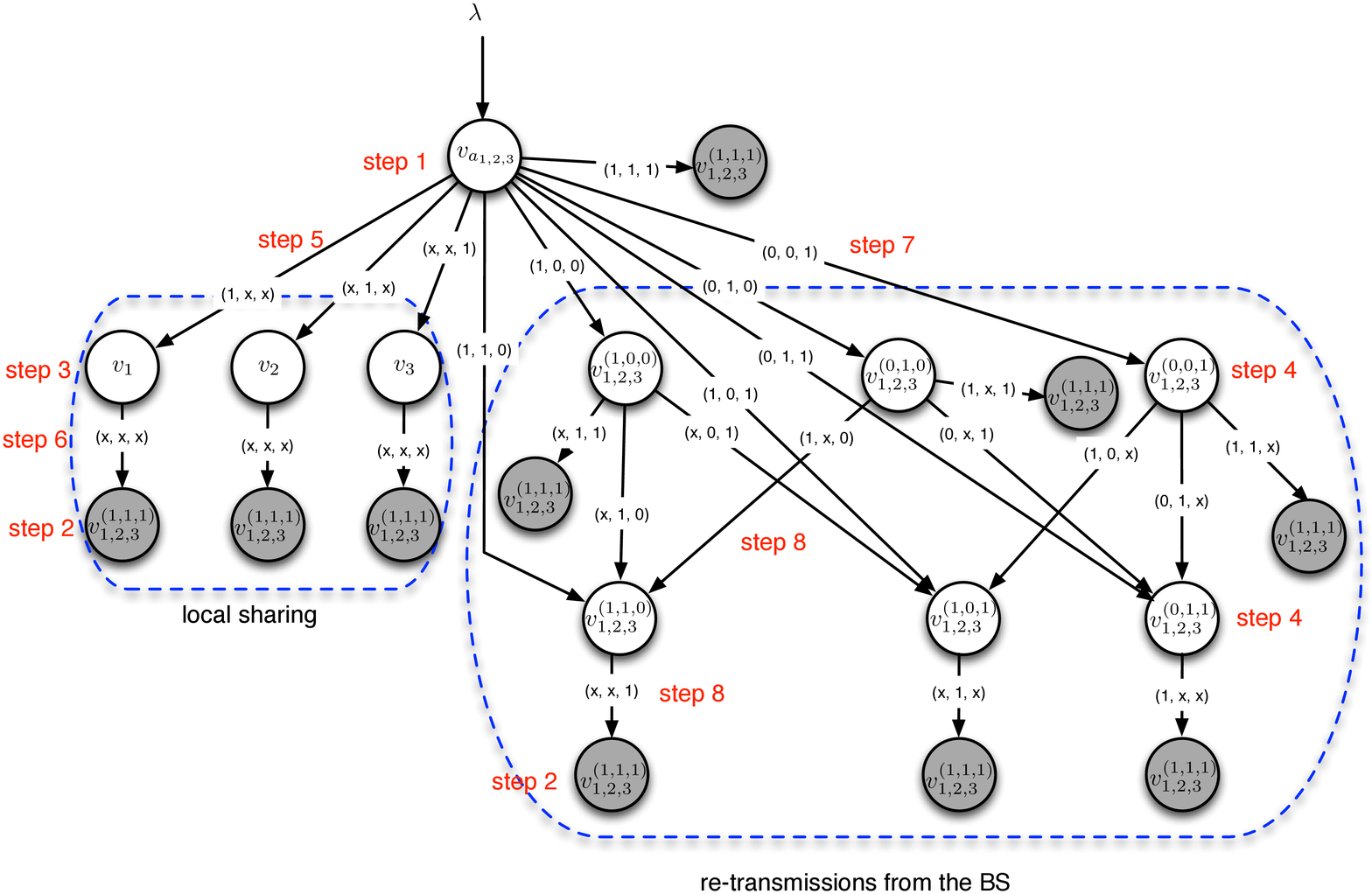}
\caption{Virtual network of Fig. \ref{fig:dyanmics}, where the gray nodes are the destination nodes.}
\label{fig:virtual}
\end{figure*}

A \textit{stability region}  consists of all arrival rates $\lambda$ such that there exists a transmission schedule ensuring that the queue is stable. A scheduling algorithm is \textit{throughput-optimal} if the queue is stable for all arrival rates  interior of the stability region. Moreover,  by $\mathbf{\Lambda}$ we denote an \textit{equal-reciprocal stability region} that is the stability region subject to the equal-reciprocal constraint. 
We then re-define an \textit{optimal equal-reciprocal scheme} for this dynamic case as a scheduling algorithm that stabilizes the queue and meets the equal-reciprocal constraint for all arrival rates  interior of $\mathbf{\Lambda}$. In other words, the optimal equal-reciprocal scheme is  throughput-optimal subject to the equal-reciprocal incentive. 
  Because of the time-varying channels and packet arrivals,  scheduling algorithms will depend on the channel state, rate, and queue size.

\subsection{Optimal centralized scheduling  under the equal-reciprocal incentive}

As the network in Fig. \ref{fig:dyanmics} includes the multiple multicasts, the first step of our scheduling design is to create a \textit{virtual network} as shown in Fig. \ref{fig:virtual}. In general, we are innovating on the lossy broadcast network by applying prior  theory about wireline networks (e.g., \cite{neely:book}).  We discuss the idea of the virtual network in the following ten steps.

\begin{enumerate}

	\item \textit{Create a virtual node $v_{a_{1,2,3}}$ corresponding to the real queue $Q$} : If a real packet arrives at $Q$, a virtual packet is created in $v_{a_{1,2,3}}$.
	
	\item \textit{Create a virtual node $v_{1,2,3}^{(1,1,1)}$ that has the empty queue}: The virtual node is the destination of the virtual packets in $v_{a_{1,2,3}}$. If a virtual packet arrives at  $v_{a_{1,2,3}}^{(1,1,1)}$, the corresponding real packet is  successfully delivered. 
	
	\item \textit{Create a virtual node $v_i$ related to  user $c_i$ for $i=1,2,3$}: A virtual packet in $v_i$ implies that user $c_i$ should store the packet and share it when scheduled. 
	
	\item \textit{Create a virtual node $v_{1,2,3}^{(r_1,r_2,r_3)}$ for $r_1,r_2,r_3= 0, 1$}: The virtual nodes are associated with the re-transmission status of the BS. The superscript $r_i$ indicates if user $c_i$ has received the packet or not, where if $r_i=1$ the packet has reached at $c_i$; else not yet. For example, a virtual packet in the node $v_{1,2,3}^{(1,1,0)}$ means that the corresponding packet in the real network has been received by $c_1$ and $c_2$, but not by $c_3$. 
	

	\item \textit{Create a virtual link $v_{a_{1,2,3}} \rightarrow v_i$,  for $i=1,2,3$, associated with an ON/OFF channel}: The virtual link is ON at time $t$ when $s_i(t)=1$ and $s_j(t)=x$ for all $j \neq i$, where $x$ can be either 0 or 1; else, the virtual link is OFF.  

	\item \textit{Create a virtual link $v_{i} \rightarrow v_{1,2,3}^{(1,1,1)}$ associated with a noiseless channel  for $i=1,2,3$}: The virtual links are always ON because of the noiseless D2D channels among the users $c_1$, $c_1$, and $c_3$.

	\item \textit{Create a virtual link $v_{a_{1,2,3}} \rightarrow v_{1,2,3}^{(r_1,r_2,r_3)}$ associated with an ON/OFF channel for $r_1, r_2, r_3=0,1$}: The virtual link is ON at time $t$ when $s_i(t)=r_i$, for $i=1,2,3$. 

	\item \textit{Create a virtual link $v_{1,2,3}^{r_1, r_2, r_3} \rightarrow v_{1,2,3}^{(r_1^{'},r_2^{'},r_3^{'})}$  associated with an ON/OFF channel for $r_1, r_2, r_3=0,1$ and $r_i^{'} \geq r_i$ for all $i$}: The virtual link is ON at time $t$ when (\textit{i}) $s_i(t)=x$ if $r_i=1$, for $i=1,2,3$ (i.e., if $c_i$ has received the packet, re-transmitting the packet does not need to consider the B2D channel to $c_i$); (\textit{ii})  $s_i(t)=r_i^{'}$ if $r_i=0$, for $i=1,2,3$.

	\item \textit{All virtual links cause interference to each other}: That implies that in the real network a user cannot simultaneously transmit and receive a packet  and cannot use B2D and D2D interfaces at the same time. 
	
		\item Except for the virtual node $v_{1,2,3}^{(1,1,1)}$, other virtual nodes have their own  virtual queues.   

\end{enumerate}

\begin{algorithm}[!t]
\small
\SetCommentSty{text}
\SetAlgoLined 
\SetKwFunction{Union}{Union}\SetKwFunction{FindCompress}{FindCompress} \SetKwInOut{Input}{input}\SetKwInOut{Output}{output}

\Input{A network instance of Fig. \ref{fig:dyanmics}}
\Output{Scheduling decision for each time}

\tcc{\textit{\textbf{Virtual network construction}}}

Construct a virtual network as shown in Fig. \ref{fig:virtual}\; \label{alg1:virtual-network}

Add other virtual queues $h_{i,j}$ to the virtual network, for $i=1,2$ and $i+1 \leq j \leq 3$\; \label{alg1:virtual-queue}

\tcc{\textit{\textbf{Virtual link scheduling for each time $t$}}}

For each virtual link $l = u \rightarrow d$ in Fig. \ref{fig:virtual}, we define a weight $W_{l}(t)$  and  a variable $\mu_l(t)$ as  \label{alg1:weight1}
\begin{eqnarray*}
W_{l}(t)=\max\{0,V_u(t)-V_d(t)\};
\end{eqnarray*}
\begin{equation*}
\mu_l(t)=
\left\{
\begin{array}{ll}
1 & \text{if virtual link $l$ is ON and scheduled;}\\
0 & \text{else;}
\end{array}
\right.
\end{equation*}

Define a variable $n_{i,j}(t)$ for $i,j=1,2,3$ as \label{alg1:weight2}
\begin{equation*}
n_{i,j}(t)=
\left\{
\begin{array}{ll}
1 & \text{if virtual link $v_{a_{1,2,3}} \rightarrow v_i$ is scheduled and $s_j(t)=0$;}\\
0 & \text{else;}
\end{array}
\right.
\end{equation*}

Schedule the virtual link  $l^* \in \mathbf{L}$ that maximizes \label{alg1:backpressure}
\begin{equation}
\sum_{l \in L} W_{l}(t) \mu_{l}(t)+\sum^2_{i=1} \sum^3_{j=i+1} H_{i,j}(t)(n_{j,i}(t)-n_{i,j}(t));
\label{eq:back-pressure}
\end{equation}

\tcc{\textit{\textbf{Real packet scheduling for each time $t$}}}

\If{
$v_{a_{1,2,3}} \rightarrow v_i$, for $i=1,2,3$, is scheduled \label{alg1:schedule-start}
}{
the BS transmits the corresponding packet, while $c_i$  needs to store the packet and broadcast it when scheduled in the future\;
}

\If{$v_{i} \rightarrow v_{1,2,3}^{(1,1,1)}$ is scheduled}{ 
user $c_i$ is scheduled to share the corresponding packet\;}

\If{$v_{a_{1,2,3}} \rightarrow v_{1,2,3}^{(r_1,r_2,r_3)}$ or $v_{1,2,3}^{(r_1, r_2, r_3)} \rightarrow v_{1,2,3}^{(r_1^{'},r_2^{'},r_3^{'})}$, for $r_i=0,1$, $r^{'}_i \geq r_i$, and $i=1,2,3$,  is scheduled}{
the BS broadcasts the corresponding packet\; 
}\label{alg1:schedule-end}

\tcc{\textit{\textbf{Virtual queueing update}}}

\ForEach{virtual link $u \rightarrow d$ that is scheduled \label{alg1:update}}{
$V_{u}(t+1)=\max\{0, V_{u}(t)-1\}$\; \label{alg1:u-update}
$V_{d}(t+1)=V_d(t)+1$\; \label{alg1:d-update}
}

\If{there is an arrival to $v_{a_{1,2,3}}$}{
Increase its virtual queue size by 1\;}

$H_{i,j}(t+1)=H_{i,j}(t)+n_{i,j}(t)-n_{j,i}(t)$ for $i=1,2$, and $i+1 \leq j \leq 3$\;  \label{alg1:h-update}

\caption{Optimal on-line scheduling algorithm}
\label{alg:centralized}
\end{algorithm}

Let $\mathbf{V}$ be the set of all vertices in Fig. \ref{fig:virtual}. For each time $t$, if a virtual link $u \rightarrow d$ is scheduled, a virtual packet is delivered from an upstream virtual node $u \in \mathbf{V}$  to a downstream virtual node $d \in \mathbf{V}$; meanwhile,   their virtual queue sizes are updated: $V_u(t+1)=\max\{0, V_u(t)-1\}$ and $V_d(t+1)=V_d(t)+1$ (see Lines \ref{alg1:u-update} and \ref{alg1:d-update} of Alg. \ref{alg:centralized} as introduced soon).

Now, we are ready to propose an optimal equal-reciprocal scheme in Alg. \ref{alg:centralized}. The proposed algorithm has four important parts: (1) virtual network construction; (2) virtual link scheduling; (3) real packet scheduling;  (4) queueing update.

We notice that transmission schedules in the real network are associated with the virtual link schedules, as described in Lines \ref{alg1:schedule-start}-\ref{alg1:schedule-end}. With the relationship between the real and virtual networks, the real network is stable if and only if the virtual network is stable. Moreover, a more general result is stated in the following lemma. 
\begin{lemma}
The stability region of the virtual network (in Fig. \ref{fig:virtual}) is the same as that of the real network (in Fig. \ref{fig:dyanmics}).
\end{lemma}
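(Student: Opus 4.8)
The plan is to prove that the two stability regions coincide by establishing an exact, slot-by-slot correspondence between the feasible actions of the real and virtual networks, and then invoking the standard fact that a network's stability region is the closure of the long-run service rates achievable under feasible scheduling. I would argue equality by mutual inclusion: any arrival rate $\lambda$ stabilizable in one network is stabilizable in the other.

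First I would fix the packet-level bookkeeping. By construction (steps 1, 2, and 10), each real packet arriving at $Q$ injects exactly one virtual packet at $v_{a_{1,2,3}}$, and that real packet is delivered to all three users precisely when its virtual packet reaches the destination $v_{1,2,3}^{(1,1,1)}$. Thus the backlog of undelivered real packets and the total virtual backlog destined for $v_{1,2,3}^{(1,1,1)}$ differ by at most a bounded multiplicative constant, since each real packet occupies at most the number of intermediate virtual nodes along its route. Consequently the time-average real queue length is finite if and only if the total time-average virtual queue length is finite, so it suffices to match the \emph{achievable service rates}.

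Next I would show that the set of per-slot feasible services is identical in the two networks. The ON/OFF status of every virtual link is, by steps 5--8, a deterministic function of the real channel state $\mathbf{s}(t)$, and step 9 declares that all virtual links mutually interfere, which is exactly the real-network restriction that a user cannot transmit and receive in the same slot and cannot use the B2D and D2D interfaces simultaneously, together with the BS sending at most one packet per slot. Hence in each slot there is a bijection between a legal real transmission (the BS broadcasting, or a scheduled user $c_i$ sharing) and the activation of the corresponding virtual link; the real-packet scheduling rules in Lines \ref{alg1:schedule-start}--\ref{alg1:schedule-end} make this map explicit. Because the feasible activation sets coincide state-by-state and the channel and arrival statistics are i.i.d. and common to both networks, any policy in one network is mirrored by a policy in the other that serves real packets at the same rate.

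Finally I would conclude by mutual inclusion: given $\lambda$ in the real stability region with a stabilizing policy $\pi$, its mirror policy $\tilde{\pi}$ keeps every virtual queue stable, so $\lambda$ lies in the virtual stability region, and the converse direction is symmetric. The main obstacle is the exactness of this action correspondence --- in particular verifying that the several virtual routes a packet may follow (through the re-transmission chain $v_{1,2,3}^{(r_1,r_2,r_3)}$ versus through a sharing node $v_i$) are in one-to-one correspondence with the admissible real delivery patterns, so that the virtual construction neither enlarges nor shrinks the realizable service region. This requires a careful case check of the link-activation rules in steps 5--8 against each channel realization, combined with the half-duplex interference rule in step 9, and is the technical heart of the argument.
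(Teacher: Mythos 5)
Your proposal is correct and follows essentially the same route as the paper, which offers no formal proof of this lemma beyond the observation preceding it: that real transmissions and virtual link schedules are in one-to-one correspondence (Lines \ref{alg1:schedule-start}--\ref{alg1:schedule-end} of Alg.~\ref{alg:centralized}, together with construction steps 5--9), so the real network is stable if and only if the virtual network is. Your mutual-inclusion argument---mapping feasible per-slot actions bijectively using the fact that every virtual link's ON/OFF state is a deterministic function of $\mathbf{s}(t)$ (known before transmission) and that the all-links-interfere rule encodes exactly the half-duplex and one-transmission-per-slot constraints---is precisely the fleshed-out version of what the paper asserts.
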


We remark that the virtual network can be divided into two parts (see Fig. \ref{fig:virtual}), where the left one is related to the \textit{local content sharing}, while the right one is corresponding to \textit{re-transmissions} from the BS. Then, the optimal equal-reciprocal scheme  will schedule the virtual link such that the virtual queue is stable and  the local sharing meets the equal-reciprocal constraint, for all packet arrival rates interior of $\mathbf{\Lambda}$.

Motivated by \cite{p2p:neely}, in Line \ref{alg1:virtual-queue} of Alg. \ref{alg:centralized} we introduce other virtual queues $h_{i,j}$ for $i=1,2$ and $i+1 \leq j \leq 3$, with the virtual queue size $H_{i,j}(t)$ at time $t$. By $n_{i,j}(t)$ we indicate if at time $t$ the BS broadcasts a packet to $c_i$ who will need to store and share the packet with $c_j$ when scheduled in the future. That is, if  virtual link $v_{a_{1,2,3}} \rightarrow v_i$ is scheduled at time $t$ and the channel state $s_j(t)=0$, then $n_{i,j}(t)=1$. We describe the queueing dynamics of $h_{i,j}$  in Line \ref{alg1:h-update} of Alg. \ref{alg:centralized}, and have the following result. 
\begin{lemma}
If $h_{i,j}$ for all $i=1,2$ and $i+1 \leq j \leq 3$ are stable, then the equal-reciprocal  constraint is met. 
\end{lemma}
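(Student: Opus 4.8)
The plan is to exploit the standard equivalence, central to the Lyapunov/backpressure framework of \cite{neely:book}, between stability of a virtual queue and the balance of its long-run arrival and departure rates. Recall from Line \ref{alg1:h-update} of Alg. \ref{alg:centralized} that each reciprocity queue obeys $H_{i,j}(t+1)=H_{i,j}(t)+n_{i,j}(t)-n_{j,i}(t)$, so that $n_{i,j}(t)$ plays the role of an arrival and $n_{j,i}(t)$ of a departure. Summing over $t=0,\dots,T-1$ telescopes to
\begin{equation*}
\frac{1}{T}\sum_{t=0}^{T-1}\bigl(n_{i,j}(t)-n_{j,i}(t)\bigr)=\frac{H_{i,j}(T)-H_{i,j}(0)}{T},
\end{equation*}
so the long-run average imbalance between the two sharing directions is governed entirely by $\lim_{T\to\infty}H_{i,j}(T)/T$.

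The crux is to show that stability of $h_{i,j}$ forces $H_{i,j}(T)/T\to 0$, and hence the left-hand side above to vanish. Because $n_{i,j}(t),n_{j,i}(t)\in\{0,1\}$, the backlog moves by at most one per slot; so if $|H_{i,j}(T)|$ were of order $T$ along some subsequence, the backlog would remain of order $T$ over a preceding window of length of order $T$, driving the time-average queue length to infinity and contradicting the finiteness assumed in the hypothesis. (Equivalently, this is just the statement that strong stability implies rate stability, which may be cited directly from \cite{neely:book}.) It follows that $\frac1T\sum_{t=0}^{T-1}n_{i,j}(t)$ and $\frac1T\sum_{t=0}^{T-1}n_{j,i}(t)$ share the same long-run average for every admissible pair.

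Finally I would translate this rate balance back into the sharing semantics. By the definition of $n_{i,j}(t)$ and the fact that each relayed packet is routed through a single node $v_i$ in the virtual network of Fig. \ref{fig:virtual}, $n_{i,j}(t)=1$ holds exactly when the BS delivers a packet to $c_i$ that $c_j$ misses while designating $c_i$ to forward it; thus the long-run average of $n_{i,j}(t)$ is precisely the expected amount of content $c_i$ shares with $c_j$. The equality of the two long-run averages is then literally the equal-reciprocal constraint for the pair $(c_i,c_j)$, and since it holds for all $i=1,2$ and $i+1\le j\le 3$ it holds for every pair in the group, proving the lemma. I expect the middle step to be the main obstacle: rigorously upgrading finiteness of the time-average backlog to $H_{i,j}(T)/T\to 0$, since $h_{i,j}$ is a signed queue that could a priori drift in either direction; the bounded-increment argument (equivalently, the rate-stability lemma of \cite{neely:book}) is what seals this gap.
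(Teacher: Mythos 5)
Your proof is correct and follows essentially the same route as the paper's: telescope the update $H_{i,j}(t+1)=H_{i,j}(t)+n_{i,j}(t)-n_{j,i}(t)$, divide by the horizon, and let it tend to infinity so that stability forces the two long-run sharing rates to coincide. The only difference is that you make explicit the step the paper leaves implicit --- that finiteness of the time-average backlog of the (signed, bounded-increment) queue $h_{i,j}$ implies $H_{i,j}(T)/T\to 0$ (strong stability implies rate stability, as in \cite{neely:book}) --- which is a welcome tightening rather than a departure.
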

\begin{proof}
According to the queueing dynamics of $h_{i,j}$, we get
\begin{align*}
H_{i,j}(t)=H_{i,j}(0)+\sum^{t}_{\tau=0} n_{i,j}(t)-n_{j,i}(t).
\end{align*}
Dividing both sides above by $t$ yields
\begin{align*}
\frac{H_{i,j}(t)}{t}=\frac{\sum^{t}_{\tau=0} n_{i,j}(t)-n_{j,i}(t)}{t}.
\end{align*}
Let $t \rightarrow \infty$, and we know that  the number of packets shared between $c_i$ and $c_j$ is equal if the virtual queue $h_{i,j}$ is stable. 
\end{proof}

Thus, the optimal equal-reciprocal scheme further becomes to schedule the virtual links such that all  virtual queues, including the virtual queues in Fig. \ref{fig:virtual} and $h_{i,j}$ for all $i=1,2$ and $i+1 \leq j \leq 3$,  are stable for all arrival rates interior of $\mathbf{\Lambda}$.

To that end, we apply a back-pressure type algorithm \cite{p2p:neely,neely:book}.  
Let $\mathbf{L}$ be a set of all the virtual links. 
Hence, we optimally schedule the virtual link set $l^* \in \mathbf{L}$ according to Line \ref{alg1:backpressure} with some weights defined in Lines \ref{alg1:weight1} and \ref{alg1:weight2}, which are related to the number of served packets if a virtual link is scheduled.

Because the virtual link schedule in Alg. \ref{alg:centralized} can stabilize all virtual queues and satisfy the equal-reciprocal constraint for all arrival rates interior of $\mathbf{\Lambda}$,  we conclude as follows. 
\begin{theorem}
The proposed on-line scheduling algorithm in Alg. \ref{alg:centralized} is an optimal equal-reciprocal scheme. 
\end{theorem}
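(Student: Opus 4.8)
The plan is to reduce the theorem to a single throughput-optimality statement and then prove it by the standard Lyapunov-drift (max-weight) argument. Recall the definition: an optimal equal-reciprocal scheme stabilizes the queue and meets the equal-reciprocal constraint for every arrival rate interior to $\mathbf{\Lambda}$. The two preceding lemmas already do most of the translation work. The stability-region lemma lets me reason about the virtual network in Fig.~\ref{fig:virtual} in place of the real network in Fig.~\ref{fig:dyanmics}, and the reciprocity lemma tells me that the equal-reciprocal constraint holds as soon as every auxiliary queue $h_{i,j}$ is rate stable (so that $H_{i,j}(t)/t\to 0$). Hence it suffices to prove that Alg.~\ref{alg:centralized} stabilizes the \emph{augmented} virtual network --- all flow queues $V_u$ for $u\in\mathbf{V}$ together with the signed reciprocity queues $H_{i,j}$ --- whenever $\lambda$ lies in the interior of $\mathbf{\Lambda}$.

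First I would introduce a quadratic Lyapunov function
\[
L(t)=\frac{1}{2}\Big(\sum_{u\in\mathbf{V}}V_u(t)^2+\sum_{i=1}^{2}\sum_{j=i+1}^{3}H_{i,j}(t)^2\Big),
\]
and compute the one-slot conditional drift $\Delta(t)=\mathbb{E}[L(t+1)-L(t)\mid\text{state}(t)]$. Using the queue updates in Lines~\ref{alg1:u-update}, \ref{alg1:d-update}, and \ref{alg1:h-update}, each squared term expands into a bounded second-order constant plus a linear cross term; for the reciprocity queues the identity $H_{i,j}(t+1)^2-H_{i,j}(t)^2=2H_{i,j}(t)(n_{i,j}(t)-n_{j,i}(t))+(n_{i,j}(t)-n_{j,i}(t))^2$ produces a controllable term $H_{i,j}(t)(n_{i,j}(t)-n_{j,i}(t))$, and the flow queues produce the familiar terms $W_l(t)\mu_l(t)$ with $W_l(t)=\max\{0,V_u(t)-V_d(t)\}$. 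Collecting everything yields a bound of the form $\Delta(t)\le B-\big[\sum_{l}W_l(t)\mu_l(t)+\sum_{i,j}H_{i,j}(t)(n_{j,i}(t)-n_{i,j}(t))\big]$, where $B$ absorbs the bounded second moments of the i.i.d.\ arrivals and of the services. The bracketed quantity is exactly the objective (Eq.~(\ref{eq:back-pressure})) that Line~\ref{alg1:backpressure} maximizes over all feasible single-slot schedules; therefore Alg.~\ref{alg:centralized} \emph{minimizes this drift bound} slot by slot, over every admissible (including stationary randomized) policy.

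Next I would characterize $\mathbf{\Lambda}$. For any $\lambda$ interior to $\mathbf{\Lambda}$ there exist $\epsilon>0$ and a stationary randomized schedule, depending only on the channel state $\mathbf{s}(t)$, that (i) carries flow across every cut of the virtual network at rate $\lambda+\epsilon$, giving strictly negative expected drift for the $V_u$ queues, and (ii) is reciprocity-balanced, i.e.\ $\mathbb{E}[n_{i,j}(t)]=\mathbb{E}[n_{j,i}(t)]$, so the $H_{i,j}$ terms contribute zero expected drift. Substituting this reference policy into the drift bound and then invoking the slot-by-slot optimality of Alg.~\ref{alg:centralized} yields $\Delta(t)\le B-\epsilon\sum_{u\in\mathbf{V}}V_u(t)$ outside a bounded region of the queue-state space. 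By the Foster--Lyapunov drift theorem (e.g.\ \cite{neely:book}), all $V_u$ are strongly stable and each $H_{i,j}$ is mean-rate stable, so $H_{i,j}(t)/t\to 0$. Feeding this back through the stability-region lemma and the reciprocity lemma establishes both queue stability of the real network and the equal-reciprocal constraint for every interior $\lambda$, which is precisely the claim.

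The hard part will be the third step: exhibiting the stationary randomized policy that \emph{simultaneously} supports the throughput and satisfies the reciprocity balance. This needs a clean description of $\mathbf{\Lambda}$ as the rates achievable by channel-state-dependent randomizations that also equalize the two counting processes $n_{i,j}$ and $n_{j,i}$; the coupling is delicate because scheduling link $v_{a_{1,2,3}}\to v_i$ simultaneously feeds the flow queues and a reciprocity queue (through the indicator $n_{i,j}$ tied to $s_j(t)=0$), so throughput and fairness cannot be decoupled. The signed nature of $H_{i,j}$ is a secondary technical point that the quadratic Lyapunov term handles automatically, but it must be flagged so that ``stability'' here is read as mean-rate stability of a difference process rather than stability of an ordinary nonnegative queue.
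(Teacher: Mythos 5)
Your proposal is correct and follows essentially the same route as the paper: the paper likewise reduces the claim, via the stability-region lemma and the reciprocity-queue lemma, to stabilizing the augmented virtual network (the $V_u$ queues plus the $h_{i,j}$ queues) and then invokes the back-pressure/Lyapunov machinery of \cite{p2p:neely,neely:book} for all arrival rates interior to $\mathbf{\Lambda}$. The only difference is one of detail, not of approach --- the paper states this conclusion without writing out the drift analysis, whereas you carry out the quadratic-Lyapunov and randomized-reference-policy steps explicitly (including the correct handling of the signed queues $H_{i,j}$ via mean-rate stability).
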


 Our scheduling design can be  extended to any number of neighboring users. First, each link $v_{a_{1, \cdots}} \rightarrow v_i$ in the local sharing part is associated with an ON/OFF channel, which is ON at time $t$ when $s_i(t)=1$. Second,  each link in the re-transmission part is associated with another ON/OFF channel according to the users who have not received the packet yet (similar to the steps 7 and 8 in Subsection IV-A). Based on the virtual network, the proposed Alg. \ref{alg:centralized} can be easily generalized.

\subsection{Sub-optimal distributed  scheduling  under the equal-reciprocal incentive}
In Alg. \ref{alg:centralized},  transmissions within the social group are scheduled by the BS, so the BS needs to inform a user to broadcast a packet by sending an additional message. To reduce these control overhead,  we hence develop a distributed  algorithm in this subsection. The idea is that if a virtual link $v_{a_{1,2,3}} \rightarrow v_{i}$ is scheduled at present, the virtual link $v_i \rightarrow v_{1,2,3}^{(1,1,1)}$ needs to be scheduled in the next time slot. In this way,  broadcasts in the social group do not need to be scheduled by the BS. Hence, the  virtual link schedule is made at the beginning of each time, except for the time slot after  virtual link $v_{a_{1,2,3}} \rightarrow v_i$ is scheduled. 

Due to the restriction of the scheduling set, the distributed algorithm leads to a smaller equal-reciprocal stability region (say $\mathbf{\Lambda}_d$) than the centralized algorithm. Next, we will design a dynamic algorithm that achieves the  region $\mathbf{\Lambda}_d$.

The distributed algorithm is similar to Alg. \ref{alg:centralized} with some modifications as follows. 
First, we re-define the variable as
\begin{equation*}
\mu_l(t)=
\left\{
\begin{array}{ll}
0.5 & \text{if link $l=v_{a_{1,2,3}} \rightarrow v_i$ for $i=1,2,3$ is ON and scheduled;} \\
1 & \text{if other link $l$ is ON and scheduled;}\\
0 & \text{else.}
\end{array}
\right.
\end{equation*}
The notion of the re-defined $\mu_l(t)$ is that if link $v_{a_{1,2,3}} \rightarrow v_i$, for $i=1,2,3$, is scheduled at present,  the virtual link $v_i \rightarrow v_{1,2,3}^{(1,1,1)}$ must be scheduled in the next time slot; hence, the scheduled packet takes two time slots. 	Second, we re-define the set $\mathbf{L}$ as the set of all virtual links excluding $v_{i} \rightarrow v_{1,2,3}^{(1,1,1)}$ for all $i=1,2,3$. 

Then, the virtual link is scheduled as follows. If $v_{a_{1,2,3}} \rightarrow v_i$, for $i=1,2,3$, is scheduled at present, user $c_i$ needs to broadcast the received packet in the next time slot; otherwise, we select a virtual link based on Eq. (\ref{eq:back-pressure}) with the newly defined $\mu_{l}(t)$ and $\mathbf{L}$. We note that size $|\mathbf{L}|$  in the distributed algorithm is smaller, and thus the complexity of the distributed algorithm is smaller than the centralized one. 

We conclude as follows  using the \textit{frame-based} Lyapunov theorem  \cite{neely:book,neely:coding}.
\begin{theorem}
The distributed scheduling algorithm achieves the  region $\mathbf{\Lambda}_d$ and satisfies the equal-reciprocal incentive. Moreover, our distributed algorithm has two advantages over the centralized one: requiring no control message exchange with BS  and achieving  lower complexity. 
\end{theorem}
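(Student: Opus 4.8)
The plan is to reduce the distributed algorithm to a \emph{frame-based} back-pressure policy and then invoke the frame-based Lyapunov drift theorem \cite{neely:book,neely:coding} in essentially the same way as for the centralized Alg.~\ref{alg:centralized}. The only structural change is that scheduling a local-sharing link $v_{a_{1,2,3}} \rightarrow v_i$ at time $t$ forces the follow-up link $v_i \rightarrow v_{1,2,3}^{(1,1,1)}$ to be scheduled at time $t+1$. I would therefore partition time into \emph{frames}: a frame is a single slot whenever a link other than $v_{a_{1,2,3}} \rightarrow v_i$ is chosen, and a two-slot block whenever a local-sharing link is chosen (the committed slot plus its forced successor). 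Since the D2D channels are noiseless, the forced link is always servable, so every frame is well defined and has length at most two. The choice $\mu_l(t)=0.5$ for the local-sharing links is exactly the per-slot service rate that, summed over a two-slot frame, delivers one virtual packet; hence the objective in Eq.~(\ref{eq:back-pressure}) is precisely the frame-averaged weighted service, and the restricted set $\mathbf{L}$ (excluding $v_i \rightarrow v_{1,2,3}^{(1,1,1)}$) encodes that the follow-up links are never chosen freely.

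Next I would set up a quadratic Lyapunov function over \emph{all} virtual queues---the queues of Fig.~\ref{fig:virtual} together with the reciprocity queues $h_{i,j}$---and compute the conditional Lyapunov drift over one frame. The max-weight rule in Line~\ref{alg1:backpressure} maximizes the drift-reduction term frame-by-frame, so it minimizes an upper bound on the frame drift among all admissible restricted schedules. Comparing this bound against a stationary randomized policy that supports any arrival rate strictly interior to $\mathbf{\Lambda}_d$ (such a policy exists by the very definition of $\mathbf{\Lambda}_d$ as the stability region of the restricted scheduling set), the frame-based Lyapunov theorem yields negative drift outside a bounded region, and hence stability of every virtual queue for all $\lambda$ interior to $\mathbf{\Lambda}_d$. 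Stability of the queues in Fig.~\ref{fig:virtual} gives throughput-optimality within $\mathbf{\Lambda}_d$, while stability of the $h_{i,j}$ queues gives the equal-reciprocal constraint through the earlier lemma relating $h_{i,j}$ stability to balanced sharing.

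The two claimed advantages then follow directly from the construction rather than from the drift analysis. No control message is required because the pairing rule is deterministic and locally known: once $c_i$ receives a packet over a scheduled $v_{a_{1,2,3}} \rightarrow v_i$, it knows without any BS instruction that it must broadcast in the next slot. The complexity reduction is immediate from $|\mathbf{L}|$ being strictly smaller than the centralized link set, so that the per-slot max-weight search in Eq.~(\ref{eq:back-pressure}) runs over fewer links.

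I expect the main obstacle to be the \emph{variable frame length} in the drift computation: unlike the per-slot centralized analysis, the two-slot commitment couples consecutive slots, so I must (i) confirm that frame lengths have bounded conditional expectation---which holds here since frames never exceed two slots---and (ii) correctly attribute arrivals and services over a frame so that the frame-normalized drift bound matches the max-weight objective with $\mu=0.5$. Getting this bookkeeping right, together with exhibiting a stationary randomized policy restricted to $\mathbf{L}$ that achieves the interior of $\mathbf{\Lambda}_d$, is the technical crux; the remainder is a direct transcription of the centralized argument.
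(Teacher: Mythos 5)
Your proposal is correct and follows essentially the same route as the paper, which itself gives no detailed proof but simply invokes the frame-based Lyapunov theorem of \cite{neely:book,neely:coding} applied to the restricted link set $\mathbf{L}$ with the re-defined $\mu_l(t)=0.5$ encoding the two-slot commitment. Your frame decomposition (one slot for ordinary links, a forced two-slot block after scheduling $v_{a_{1,2,3}} \rightarrow v_i$), the drift comparison against a stationary randomized policy supporting the interior of $\mathbf{\Lambda}_d$, and the observation that the two advantages follow from the construction rather than the drift analysis are exactly the details the paper leaves implicit.
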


\subsection{Equal-reciprocal stability regions for the two symmetric users}
Though a lower complexity of the distributed algorithm, the distributed algorithm has  performance loss in	 the equal-reciprocal stability region. 
In this subsection, we compare the  regions of the centralized and distributed algorithms, with the focus on the local sharing part for the two symmetric users scenario, while a more general network is numerically studied in  Subsection \ref{subsection:simulation}. 


\begin{figure*}[!t]
\begin{minipage}{.3\textwidth}
\centering
\includegraphics[width=.72\textwidth]{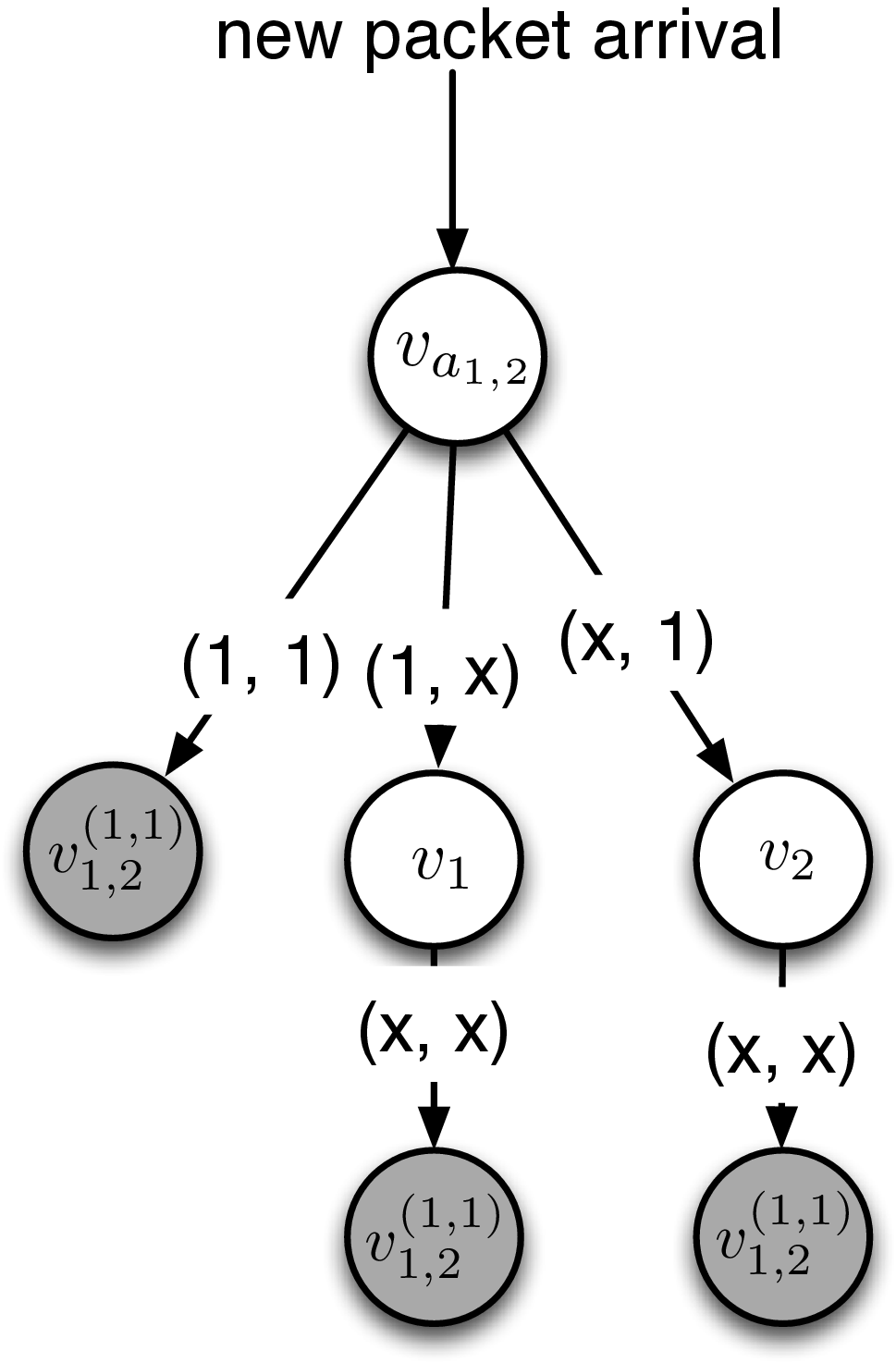}
\caption{Virtual network for the two symmetric users.}
\label{fig:two-node-capacity-1}
\end{minipage}\hfill
\begin{minipage}{.3\textwidth}
\centering
\includegraphics[width=.55\textwidth]{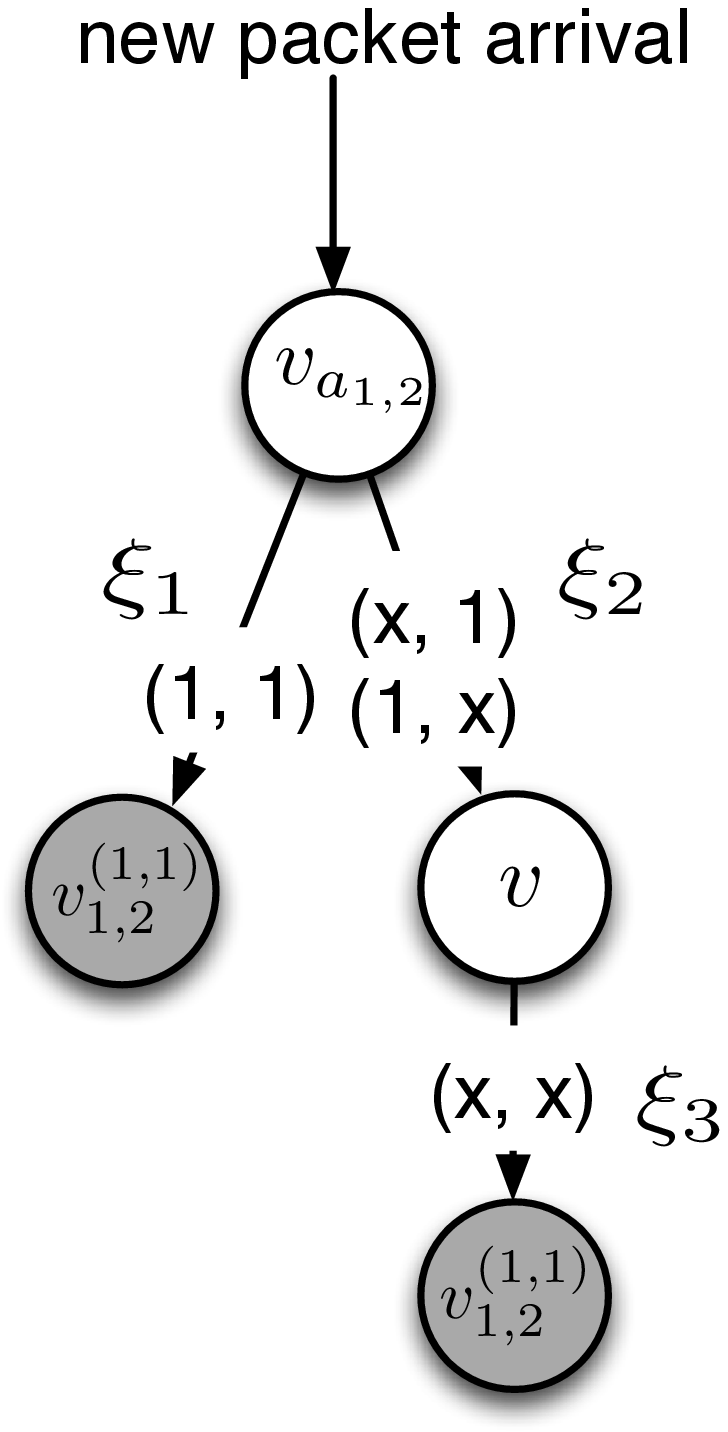}
\caption{Equivalent virtual network for the centralized algorithm.}
\label{fig:two-node-capacity-2}
\end{minipage} \hfill
\begin{minipage}{.3\textwidth}
\centering
\includegraphics[width=.6\textwidth]{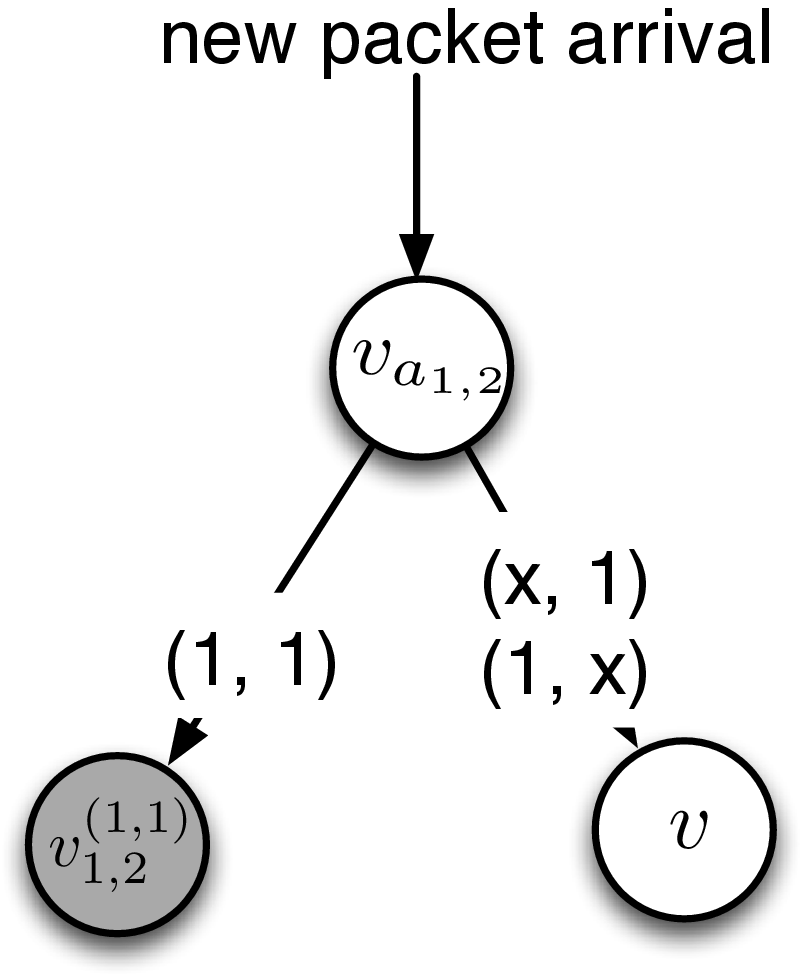}
\vspace{1cm}
\caption{Equivalent virtual network for the distributed algorithm.}
\label{fig:two-node-capacity-3}
\end{minipage}\hfill
\end{figure*}

We start with describing the equal-reciprocal stability region of the centralized algorithm. Fig. \ref{fig:two-node-capacity-1} illustrates the virtual network for the two symmetric users. Because of the symmetric users, we can combine both virtual nodes $v_{1}$ and $v_2$ as a virtual node $v$ as shown in Fig. \ref{fig:two-node-capacity-2}. 	  By $\xi_1$, $\xi_2$, and $\xi_3$ we denote the packet service rates of the links $v_{a_{1,2}} \rightarrow v_{1,2}^{(1,1)}$, $v_{a_{1,2}} \rightarrow v$, and $v \rightarrow v_{1,2}^{(1,1)}$; then, the  region can be expressed as follows.
\begin{align*}
&\lambda=\xi_1+\xi_2; \hspace{.3cm} \xi_2=\xi_3;\\
&\xi_1 \leq (1-p_e)^2; \hspace{.3cm} \xi_2\leq 2p_e(1-p_e); \hspace{.3cm} \xi_3\leq 1; \\
&\xi_1+\xi_2 \leq 1-p_e^2; \hspace{.3cm} \xi_2+\xi_3 \leq 1; \hspace{.3cm} \xi_1+\xi_3 \leq 1;\\
&\xi_1+\xi_2+\xi_3 \leq 1.
\end{align*}
These constraints come from the fact that the maximum number of packets that can be transmitted at each time is one.  By simplifying the above constraints, we derive the equal-reciprocal stability region of the centralized algorithm as follows. 
\begin{eqnarray*}
\lambda \leq 1-p_e^2.
\end{eqnarray*}

In Fig. \ref{fig:two-node-capacity-3}, we show the equivalent virtual network for the distributed algorithm, where we  combine the virtual node $v_1$, $v_2$, and $v_{1,2}^{(1,1)}$ as a virtual node $v$; as such, the time for transmitting a virtual packet over $v_{a_{1,2}} \rightarrow v$ is two as it takes into account the time for both link $v_{a_{1,2}} \rightarrow v_i$ and $v_i \rightarrow v_{1,2}^{(1,1)}$ for $i=1,2$. Therefore, we segment the timeline into frames, which size is one or two time slots if the virtual link  $v_{a_{1,2}} \rightarrow v_{1,2}^{(1,1)}$ or $v_{a_{1,2}} \rightarrow v$, respectively, is scheduled.  Similar to  \cite{neely:coding}, we then can derive the equal-reciprocal capacity region of the distributed based on algorithm  as
\begin{eqnarray*}
\lambda \cdot \left((1-p_e)^2 \cdot 1+ p_e^2 \cdot 1+2 p_e (1-p_e) \cdot 2\right) \leq 1-p_e^2,
\end{eqnarray*}
where the left term is the expected number of arrivals in a frame and the right one is the maximum number of packets that can be transmitted within a frame. 
We show both stability regions in Fig. \ref{fig:capacity} and conclude as follows. 
\begin{figure}
\centering
\includegraphics[width=.42\textwidth]{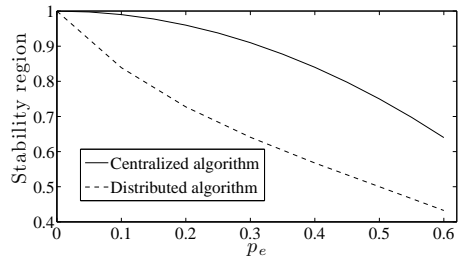}
\caption{Equal-reciprocal stability region comparison between the centralized and distributed algorithms}
\label{fig:capacity}
\end{figure}
\begin{lemma}
For the two symmetric users, the ratio of the equal-reciprocal stability region  between the centralize and the distributed algorithms is $1+2p_e-2p_e^2$. Moreover, when $p_e=1/2$, the distributed algorithm has the largest stability region loss (see Fig. \ref{fig:capacity}).  
\end{lemma}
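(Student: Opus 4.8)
The plan is to read off the two stability regions derived immediately before the statement and compare them directly; almost all of the work is already done, so the proof reduces to a single algebraic simplification followed by an elementary one-variable optimization.

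First I would record the two thresholds established just above: the centralized region is $\lambda \le 1-p_e^2$, and the distributed region is $\lambda\cdot\bigl((1-p_e)^2+p_e^2+2p_e(1-p_e)\cdot 2\bigr)\le 1-p_e^2$. The immediate step is to simplify the bracketed coefficient. Expanding, $(1-p_e)^2+p_e^2+4p_e(1-p_e)=1+2p_e-2p_e^2$, so the distributed threshold becomes $\lambda\le (1-p_e^2)/(1+2p_e-2p_e^2)$. Dividing the centralized threshold by the distributed threshold cancels the common factor $1-p_e^2$ and yields exactly the claimed ratio $1+2p_e-2p_e^2$.

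For the second assertion I would view this ratio as a function $f(p_e)=1+2p_e-2p_e^2$ on $[0,1]$ and maximize it. Since $f'(p_e)=2-4p_e$ vanishes only at $p_e=1/2$ and $f''(p_e)=-4<0$, the function is a downward parabola attaining its unique maximum at $p_e=1/2$, where $f(1/2)=3/2$. A larger ratio means the distributed region is a smaller fraction of the centralized one, i.e.\ a greater relative loss, so the loss is largest precisely at $p_e=1/2$, matching Fig.~\ref{fig:capacity}.

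There is no genuine obstacle here; the only place requiring care is the expansion of the distributed coefficient, where the cross term $2p_e(1-p_e)$ must be weighted by the frame length $2$ (because scheduling $v_{a_{1,2}} \rightarrow v$ consumes two slots) before combining with $(1-p_e)^2$ and $p_e^2$. The one conceptual point worth stating explicitly is that ``largest loss'' corresponds to the \emph{maximum} of the ratio: since $f(p_e)\ge 1$ with equality only at $p_e=0$, and since $f$ grows as the distributed region shrinks relative to the centralized one, maximizing $f$ is the correct objective, which is why the derivative test on $f$ settles the claim.
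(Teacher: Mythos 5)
Your proposal is correct and follows exactly the route the paper intends: the lemma is an immediate consequence of the two stability regions derived just before it, and your simplification $(1-p_e)^2+p_e^2+4p_e(1-p_e)=1+2p_e-2p_e^2$ together with the downward-parabola argument at $p_e=1/2$ supplies precisely the algebra the paper leaves implicit.
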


\subsection{More numerical studies} \label{subsection:simulation}
In this subsection, we numerically study the proposed centralized and distributed algorithms for the network example in Fig. \ref{fig:dyanmics}, where we fix the channel error probabilities  of $c_1$, $c_2$,  $c_3$ to be $0.2$, $0.4$, $0.6$, respectively. We will compare the proposed algorithms with a throughput-optimal scheduling algorithm without incentivizing the social grouping, i.e., all virtual links in the local sharing part of Fig. \ref{fig:virtual} are removed.

We show the average  queue size and the average completion time of a packet in Figs. \ref{fig:queue-size} and \ref{fig:delay}. We observe that the  approximate distributed algorithm still outperforms the throughput-optimal algorithm without social grouping, and its performance is close to the centralized algorithm. 

Moreover, we  demonstrate in Fig. \ref{fig:prob} the ratio of packets that are shared among the users $c_1$, $c_2$, or $c_3$ (i.e., $\frac{\# \text{total packets shared by the users}}{\# \text{total packets at the BS}}$), which is referred to as the \textit{local content sharing probability}. We notice that the local sharing probability increases as the packet arrival rate increases.

\begin{figure*}[!ht]
\begin{minipage}{.33\textwidth}
\centering
\includegraphics[width=\textwidth]{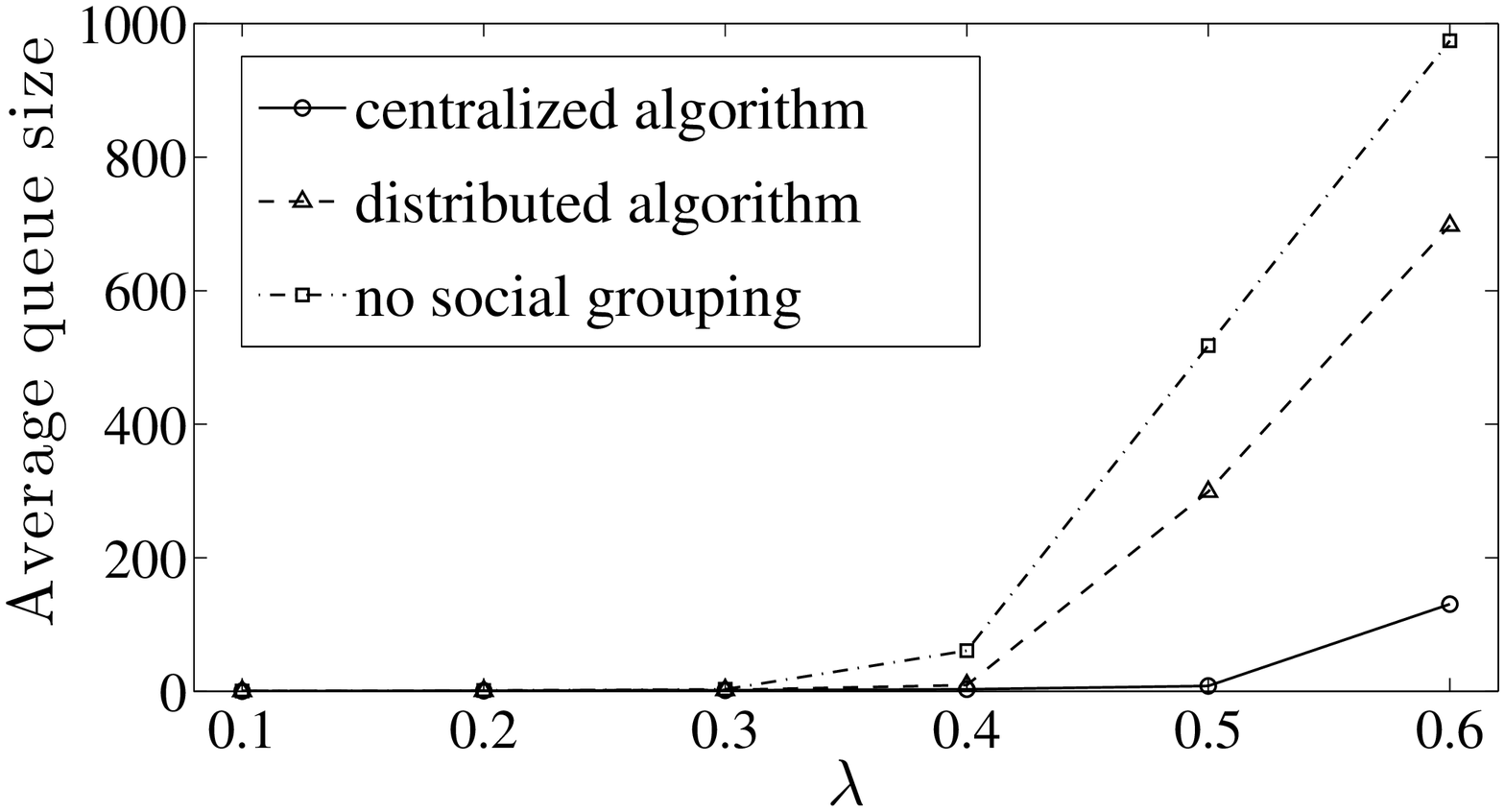}
\caption{Average  queue size versus $\lambda$.}
\label{fig:queue-size}
\end{minipage}\hfill
\begin{minipage}{.33\textwidth}
\centering
\includegraphics[width=\textwidth]{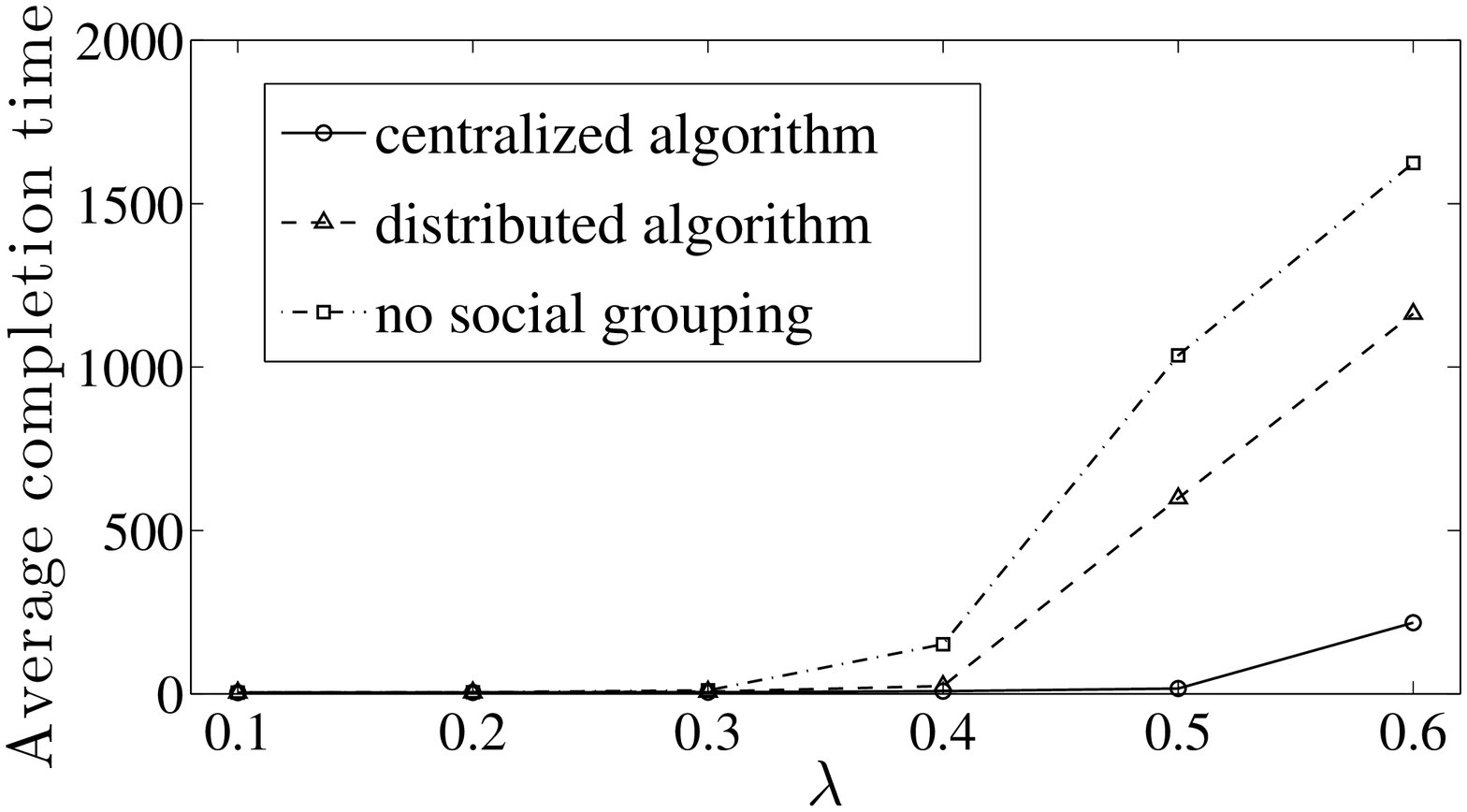}
\caption{Average completion time versus $\lambda$.}
\label{fig:delay}
\end{minipage} \hfill
\begin{minipage}{.33\textwidth}
\centering
\includegraphics[width=\textwidth]{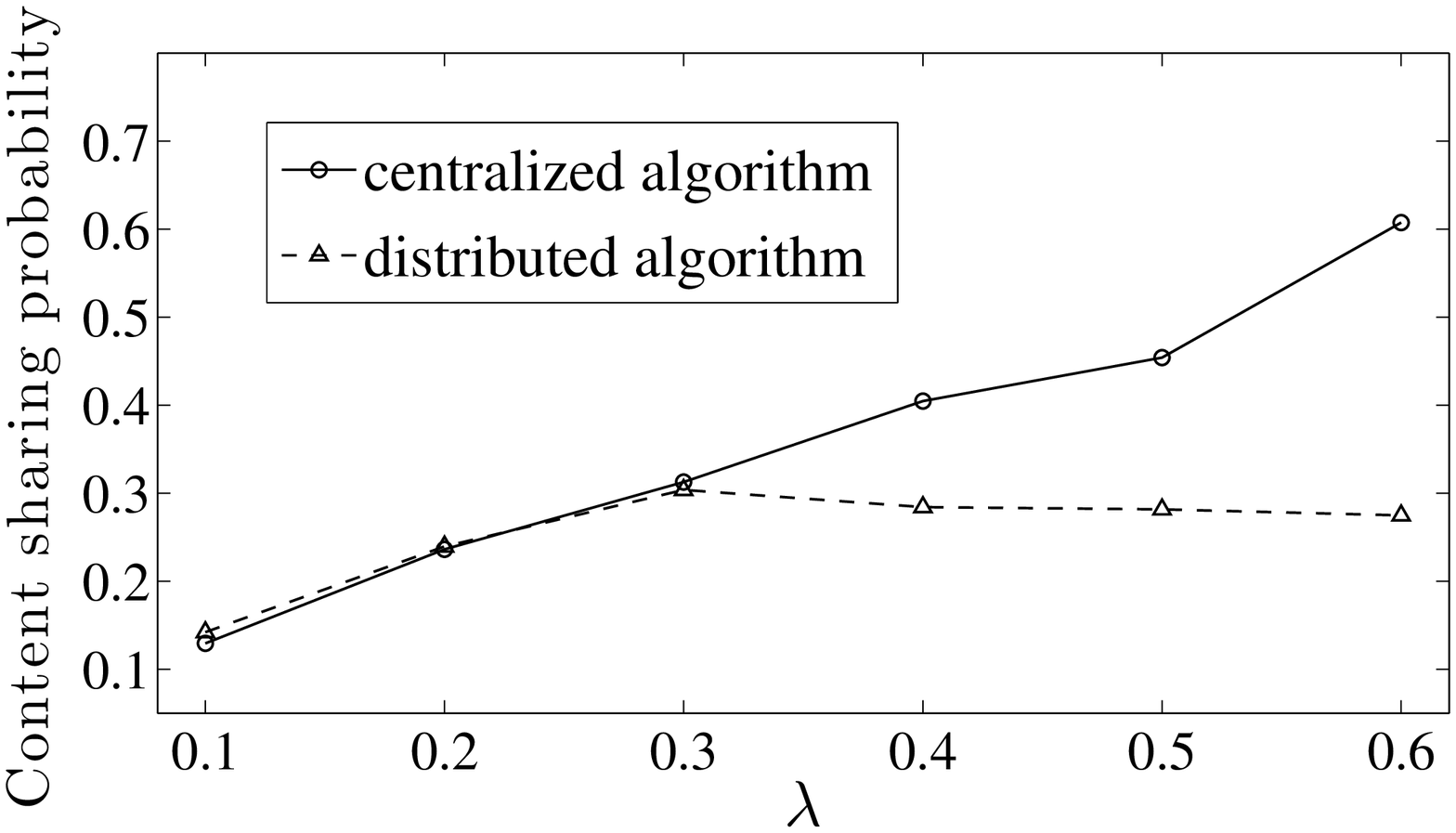}
\caption{Local content sharing probability within the social group versus $\lambda$.}
\label{fig:prob}
\end{minipage}\hfill
\end{figure*}

\section{Further discussions} \label{section:further-discussion}
We further discuss the equal-reciprocal scheme by considering correlated channel models, unreliable local communications, and a local utility of each individual user. 

\subsection{Channels with memory} \label{subsection:memory}
In this subsection, we extend our results to channel models with memory. To get a clear engineering insight, we investigate the  two symmetric users in  Section \ref{section:two-devices} again. In particular,  we consider a two-state Markov chain as the channel model, with the state $s_i(t) \in \{0, 1\}$ at time $t$ and the stationary transition probabilities $\zeta_{i,j}$ from state $i$ to $j$, i.e., $\zeta_{01}=\mathbb{P}(s_i(t+1)=1|s_i(t)=0)$, and $\zeta_{10}=\mathbb{P}(s_i(t+1)=0|s_i(t)=1)$. Then, we can get the steady-state probabilities $\pi_0=\zeta_{10}/(\zeta_{01}+\zeta_{10})$ and $\pi_1=\zeta_{01}/(\zeta_{01}+\zeta_{10})$.

Let $T_{ij}$ be the expected time from state $i$ to $j$, which can be calculated by the following equation.
\begin{align*}
T_{01}=\zeta_{00}(1+T_{01})+\zeta_{01} \cdot 1=\frac{1}{\zeta_{01}}. 
\end{align*}

We hence can calculate the completion time $T_{=}$ at the steady state: 
\begin{align*}
T_{=}&=\pi_0^2(1+T_{=})+\pi_1^2\cdot 1+2\pi_0 \pi_1 (1+\frac{1}{\zeta_{01}})=\frac{1+\frac{2 \pi_0 (1-\pi_0)}{\zeta_{01}}}{1-\pi_0^2}.
\end{align*}
Similar to Subsection \ref{subsection:incentive}, we  get $T_{\cup}^*=(-2 \pi_0^2+2 \pi_0 +1)/(1-\pi_0^2)$. To conclude, we find the improvement ratio is
\begin{align*}
R_{=/\cup}=\frac{\frac{2 \pi_0 (1-\pi_0)}{\zeta_{01}}+1}{-2 \pi_0^2+2 \pi_0 +1}.
\end{align*}

\begin{remark}
The improvement ratio is 1 when $\zeta_{01}=1$; the ratio is 1.33 when $\zeta_{01}=\zeta_{10}=0.5$; the ratio is 3 when $\zeta_{01} \rightarrow 0$, which  is consistent with Lemma \ref{lemma:improve-from-social}. In other words, our results and engineering intuition can be extended to the Markov-modulated channels. 
\end{remark}

\begin{remark}
Regarding the dynamic-arrival case,  the proposed on-line Alg. \ref{alg:centralized} can be applied for finite-state Markov chains, which can be proven using the frame-based Lyapunov technique \cite{neely:book} where the frame sizes are the transition times between the states.  
\end{remark}

\subsection{Unreliable local communications} \label{subsection:unreliable-local}
We reconsider the simple two symmetric users, but the channel between the two users can be OFF with a probability $\gamma$. To that end, we assume that the BS has to decide if a user will share a packet when receiving it. In other words, the users will  keep a copy of each packet only if being informed of sharing. We  get $T_{\cup}$:
\begin{eqnarray*}
T_{\cup}&=&p_e^2(1+T_{\cup})+(1-p_e)^2\cdot 1+2(1-p_e)p_e \left(p_{1 \rightarrow 2}\cdot (1+\frac{1}{1-\gamma})+(1-p_{1\rightarrow 2})(1+\frac{1}{1-p_e})\right).
\end{eqnarray*} 
Then, if $\gamma \leq p_e$,  the optimal sharing probability $p_{1 \rightarrow 2}^*=1$; moreover, 
\begin{align*}
T_{\cup}^*=\frac{1+\frac{2p_e(1-p_e)}{1-\gamma}}{1-p_e^2},
\end{align*}
and the improvement ratio is
\begin{align} 
R_{=/\cup}=\frac{2p_e+1}{\frac{2p_e(1-p_e)}{1-\gamma}+1}.  \label{eq:unreliable-local}
\end{align}

\begin{remark}
If $\gamma \leq p_e$, the ratio is $1$ when $p_e=0$,  and $3$ when $p_e = 1$, which is consistent with Lemma \ref{lemma:improve-from-social} again;  moreover, the ratio is $\frac{2}{0.5/(1-\gamma)+1}$ when $p_e=0.5$, where there is a performance loss due to $\gamma$. Otherwise, if $\gamma \geq p_e$, the optimal sharing probability $p_{1 \rightarrow 2}^*=0$, i.e., there is be no cooperation.  As usually local short-distance communication quality is better than long-distance communications with remote BSs, to motivate local users to construct a social group would be promising, though there will be a little bit performance loss due to the imperfect D2D (see Eq. (\ref{eq:unreliable-local})).
\end{remark}
\begin{remark}
Alg. \ref{alg:centralized} can be extended for  unreliable D2D networks, while  the  local sharing part would be similar to the  re-transmission part but the corresponding link error probabilities are associated with the error probabilities between the users. 
\end{remark}

%

\subsection{Individual performance under the optimal equal-reciprocal scheme} \label{subsection:individual}
Subject to the equal-reciprocal constraint, users are motivated to form a social group because of the fairness guarantee, while the optimal equal-reciprocal scheme minimizes the completion time.  We finally discuss whether and how the optimal equal-reciprocal scheme further benefits each individual user. 

We define a \textit{utility} $u_{i}$ of user $c_i$ as the difference  between the amount of packets downloaded from social groups and those uploaded from user $c_i$:
\begin{align*}
u_i= \text{(\#downloaded by $c_i$) - (\#uploaded by $c_i$)}. 
\end{align*}
The motivation of the utility is that long-distance B2D communications are more expensive than local D2D communications (e.g., see [2]), and therefore  a user would like to save cost by downloading content from a social group while only upload a smaller amount of content in return.

First, we notice that the tit-for-tat incentive results in the zero utility  due to  unicast communications. We show in the next theorem that the optimal equal-reciprocal scheme maximizes the local utility of each user subject to the equal reciprocal constraint.

\begin{theorem} \label{theorem:individual}
The optimal equal-reciprocal schemes (in Sections \ref{section:two-devices} and \ref{section:asymmetric}) (locally) maximize the utility $u_i$ of each individual $c_i$, for all $i$, subject to the equal-reciprocal constraint. 
\end{theorem}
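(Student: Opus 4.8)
The plan is to first reduce the utility to a transparent, scheme-independent form and then match it to the completion-time optimum. Let $a_{i\rightarrow j}$ denote the expected number of packets that $c_j$ obtains from $c_i$'s D2D broadcasts, let $U_i$ be the expected number of broadcasts $c_i$ makes, and let $D_i=\sum_{j\neq i}a_{j\rightarrow i}$ be the expected number of packets $c_i$ downloads, so that $u_i=D_i-U_i$. The equal-reciprocal constraint is exactly $a_{i\rightarrow j}=a_{j\rightarrow i}$ for every pair, so summing it over all partners $j$ gives $D_i=\sum_{j\neq i}a_{j\rightarrow i}=\sum_{j\neq i}a_{i\rightarrow j}$. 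I would then exploit the broadcast medium: a single broadcast by $c_i$ reaching the remaining set contributes its cardinality to $\sum_{j}a_{i\rightarrow j}$ but only one unit to $U_i$. Writing the sum over $c_i$'s broadcasts $b$, each with a recipient set of size $|\text{recip}_b|$, yields the clean identity $u_i=\sum_b(|\text{recip}_b|-1)\ge 0$. This already isolates the broadcast advantage and recovers the remark preceding the theorem: under unicast tit-for-tat every transmission has a single recipient, so $u_i=0$.

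Second, I would connect this identity to the optimal equal-reciprocal scheme (Eq.~(\ref{eq:large-network}) and its asymmetric counterpart), whose defining feature is that whenever somebody holds the packet, a single D2D broadcast covers all currently uninformed users $\mathbf{R}$ in one slot. Two points would then be argued. First, this scheme maximizes each download $D_i$: $c_i$ is supplied through D2D whenever it misses the BS while some neighbor already holds the packet, and no sharing policy can deliver to $c_i$ more often than that, so $D_i$ attains its largest feasible value $D_i^{\max}$. Second, among equal-reciprocal schemes sustaining this download level, covering all of $\mathbf{R}$ at once makes every transmission maximally wide, so the number $U_i$ of $c_i$'s own broadcasts needed to meet $\sum_{j}a_{i\rightarrow j}=D_i$ is as small as possible. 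Since $u_i=D_i-U_i$, maximizing $D_i$ while minimizing $U_i$ maximizes $u_i$ simultaneously for every $i$.

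To make these two points precise against \emph{all} equal-reciprocal schemes, I would use the pairwise equal-reciprocal equations, not merely their sum. A short computation with the pairwise constraints forces the broadcaster-selection probabilities to an essentially unique solution (uniform in the symmetric case), so that among all schemes covering $\mathbf{R}$ in one broadcast the value of $u_i$ is pinned down; any scheme that instead falls back on BS retransmissions strictly lowers $D_i$. A direct evaluation for the symmetric network then confirms the picture, giving a strictly positive $u_i$ for $N>2$, consistent with the identity above.

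The main obstacle is exactly this last comparison, which is delicate because D2D is a broadcast: one cannot perturb a single pairwise flow $a_{j\rightarrow i}$ in isolation, since suppressing a broadcast to rebalance one constraint simultaneously alters the flows to every other member of $\mathbf{R}$. I would handle this coupling by an exchange argument, showing that replacing any BS-retransmission step, or any narrow broadcast, by a single all-covering broadcast can be performed while preserving every pairwise equal-reciprocal equation and never decreasing any $u_i$. Equivalently, one observes that the total utility $\sum_i u_i$ is invariant among efficient (cover-all) schemes and that the full set of pairwise constraints leaves no slack to reallocate it between users, so the uniquely determined per-user value is also the per-user maximum.
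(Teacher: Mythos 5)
Your opening identity is sound, and it is in fact a sharper, quantitative version of what the paper's own proof uses: under the pairwise constraints $a_{i\to j}=a_{j\to i}$, summing gives $D_i=\sum_{j\ne i}a_{i\to j}$, and since each broadcast costs one upload but is counted once per recipient, $u_i=\sum_b(|\text{recip}_b|-1)\ge 0$. This makes precise the paper's observation that, thanks to the broadcast medium plus equal-reciprocity, utility is non-decreasing in the amounts shared. The genuine gap is in your second half, where you characterize the optimal equal-reciprocal scheme as one that, whenever somebody holds the packet, covers all of $\mathbf{R}$ in one slot, and conclude that $D_i$ attains its unconstrained maximum $D_i^{\max}$ and that any scheme falling back on BS retransmissions strictly lowers $D_i$. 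That characterization holds only for the symmetric networks of Section \ref{section:two-devices}. In the asymmetric networks of Section \ref{section:asymmetric} --- which the theorem explicitly covers --- the equal-reciprocal constraint itself forbids full sharing: by Eq. (\ref{eq:p12}) the better user shares only with probability $p^*_{1\to 2}=\frac{(1-p_{e,2})p_{e,1}}{(1-p_{e,1})p_{e,2}}<1$, and with the complementary probability the BS \emph{does} retransmit; likewise Const.~4 of Theorem \ref{theorem:linear-program} permits $p_{1\to 3}+p_{2\to 3}<1$. So the worse user is emphatically not ``supplied through D2D whenever it misses the BS while some neighbor holds the packet,'' $D_i^{\max}$ is not attained, and your proposed exchange step (replacing a BS retransmission by an all-covering broadcast) is infeasible: it would violate the pairwise equal-reciprocal equations, which is exactly why the optimal scheme suffers a loss relative to full cooperation, i.e., $T^*_\cup>T_f$.

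The paper closes this gap by never characterizing the optimum explicitly. It argues within the feasible set by contradiction: if some equal-reciprocal scheme had $\alpha_{i\to j}>\alpha^*_{i\to j}$ with higher utility (which, by the monotonicity your identity captures, is the only way to beat the optimal scheme), then that scheme shares strictly more content while still satisfying the constraint, so the completion time could be pushed below $T^*_\cup$ --- contradicting the defining optimality of the scheme. This uses completion-time optimality as a black box and is indifferent to whether the optimum involves partial sharing and BS retransmissions, so it covers the asymmetric case where your argument stops. To salvage your route you would have to replace ``covers all of $\mathbf{R}$'' with ``shares the maximal amounts compatible with the pairwise constraints,'' and proving that the completion-time optimum has this maximality property is precisely the paper's contradiction argument, not the coverage claim you rely on.
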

\begin{proof}
Let $\alpha_{i \rightarrow j}$  be the expected amount of content user $c_i$ shares with $c_j$. Due to the equal-reciprocal constraint,  we  need $\alpha_{j \rightarrow i}=\alpha_{i \rightarrow j}$ for all $i,j$. Then, the utility $u_i$ depends on $\alpha_{i \rightarrow j}$ for all $j$. 
%

For a user $c_i$, we assume  the optimal equal-reciprocal scheme results in $\alpha^*_{i \rightarrow j}$ for all $j$. We consider two cases as follows:
\begin{itemize}
	\item $\alpha_{i \rightarrow j} < \alpha^*_{i \rightarrow j}$ for some $j$:  Because of  the broadcast medium and the equal-reciprocal constraint, each user can get at least one packet in return when broadcasting a packet. Hence, the utility is an increasing function in $\alpha_{i,j}$, i.e., $u_i(\alpha_{i \rightarrow j}) < u_i(\alpha^*_{i \rightarrow j})$.
	\item   $\alpha_{i \rightarrow j} > \alpha^*_{i \rightarrow j}$ for some $j$ : Suppose $u_i(\alpha_{i \rightarrow j}) > u_i(\alpha^*_{i \rightarrow j})$. Then, more content can be shared among users subject to the equal-reciprocal constraint, and the completion time can be further reduced.  We then get a contraction to the optimality of $\alpha^*_{i \rightarrow j}$. Hence,  $u_i(\alpha_{i \rightarrow j}) < u_i(\alpha^*_{i \rightarrow j})$. 
\end{itemize} 
By fully considering the two cases, we conclude that the optimal equal-reciprocal scheme also (locally) maximizes the local utility of each individual. 
\end{proof}

\begin{remark}
To motivate social grouping with the optimal equal-reciprocal incentive benefits not only the BS (i.e., reduce the  completion time), but also users (i.e., save cost). The optimal incentive scheme is a \textit{win-win} strategy. In particular, the optimal equal-reciprocal scheme provide two incentives for users: (1) fairness from the equal-reciprocal constraint, and (2) local utility maximization. 
\end{remark}

\begin{example}
We  now further analyze the large symmetric  network scenario in Subsection \ref{section:large-group}. Let $D$ be the expected amount of content downloaded by a user from the social group, then  
\begin{eqnarray*}
D=p_e(1-p_e^{N-1}),
\end{eqnarray*}
where the first term indicates a packet loss from the BS, while the second term is the probability that at least  one other user has got the packet. Moreover,  let $U$ be the expected amount of content uploaded by a user in the social group, then based on the sharing probability of Eq. (\ref{eq:large-network}), we get 
\begin{eqnarray*}
U=(1-p_e) \cdot \sum_{i=0}^{N-1}{N-1 \choose i}(1-p_e)^i p_e^{N-1-i} \cdot \frac{1}{i+1},
\end{eqnarray*}
where the first term implies that the user has successfully received the packet, and the variable $i$ is the number of the other users that have got the packet yet;  precisely, $1/(i+1)$ is the probability that the user is selected to share the content. 

In Fig. \ref{fig:individual}, we show the ratio of $D/U$, which indicates that a user can be rewarded with more packets from the social group than the contribution as long as the number of users is not trivial. With the aid of the incentivized social group, a user can save cost by downloading content from the  social group  while only upload a smaller amount of content in return.  Finally, we remark that using the tit-for-tat incentive, the ratio is one.
\hfill$\blacktriangleleft$

\begin{figure}
\centering
\includegraphics[width=.5\textwidth]{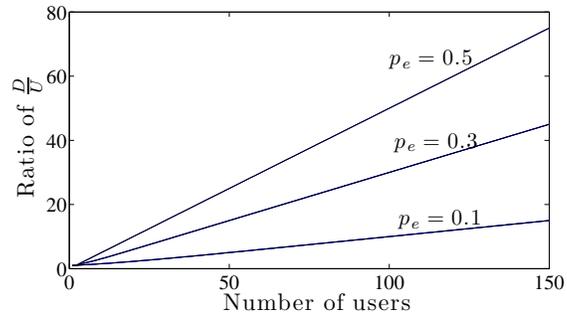}
\caption{Ratio of $D/U$ for an individual user.}
\label{fig:individual}
\end{figure} 
\end{example}

%
%
%

\section{Conclusion} \label{section:conclusion}
In this paper, we consider a practical MicroCast network scenario and propose the  equal-reciprocal incentive scheme to motivate social grouping. We theoretically investigate the optimal equal-reciprocal scheme, and  show that the optimal equal-reciprocal mechanism is a win-win policy that improves the performance of both BSs and local users. Finally, we propose on-line scheduling algorithms that dynamically select a user to share content. We conclude that the network performance can be upgraded using the proposed  social grouping and on-line algorithms. 

{\small
\bibliographystyle{IEEEtran}
\bibliography{IEEEabrv,ref}
}

\appendices
\section{B2D broadcast by exploiting the common interest} \label{appendix:b2d-identify}
Though traditionally the users communicate with the BS via their cellular links \textit{independently},  one of the users in a group can initiate a streaming session; then, the BS will identify the common interest and deliver the common interest over the broadcast B2D; moreover, a new user can join existing streams. These functions have been implemented in the \textit{MicroDownload} system \cite{hulya-1,hulya-2,hulya-3}, which is a component of the MicroCast system. We now analyze the benefit from the common interest in the time-varying ON/OFF network.

Let $T_{\neq}$ be the  completion time if the BS uses the traditional unicast to deliver $q$ to $c_1, c_2$ (i.e., without identifying the common interest).  We notice that $T_{\neq} < 2/(1-p_e)$ as the BS knows the channel state vector in advance and hence can schedule a packet to transmit if the associated channel is ON. Then, we express $T_{\neq}$ as follows.
\begin{eqnarray}
T_{\neq}&=&p^2_e(1+T_{\neq})+(1-p_e^2)(1+\frac{1}{1-p_e}) \label{eq:Td}\\
&=&\frac{p_e+2}{1-p_e^2}, \nonumber
\end{eqnarray}
where the first term in RHS of Eq. (\ref{eq:Td})  indicates that both channels are OFF and the first B2D transmission takes one time slot, while the second term means that one of the packets $q_1$ and $q_2$ has been successfully delivered and the other has not yet; hence, it takes the expected time of $1/(1-p_e)$ to deliver the other packet.  

Hence, the improvement from identifying the common interest (see Eq. (\ref{eq:Tc})) is 
\begin{eqnarray*}
R_{\neq/=}:=\frac{T_{\neq}}{T_{=}}=\frac{p_e+2}{2p_e+1}. 
\end{eqnarray*} 

\section{To incentivize a user or not} \label{section:to-incentivize}
As a user with a poor B2D channel reduces sharing content, \textit{when there is a new coming user, is it the best choice to motivate the user to join in the social group, or is it better to serve the user only by the BS?} We remind  that  users of a social group would  exchange a key before  any content transfer; as such, users outside the social group cannot decode received packets from the social group.

%
%
%
%
%
%
%

First, we analyze three asymmetric users $c_1$, $c_2$, and $c_3$ like Subsection \ref{subsection:large-asymmetric}, while $(c_1, c_2)$ is an existing social group and $c_3$ is a new user, where $p_{e,1}$, $p_{e,2}$, $p_{e,3}$ are the channel error probabilities, respectively. Via the simplified case, we get more insight; a general case will be discussed later. 

We consider two possible groupings $G_1=((c_1, c_2),(c_3))$ and $G_2=(c_1, c_2, c_3)$. If $c_3$ is not motivated to participate in the existing group, the BS needs to transmit the packet to $c_3$. Let $T^*_{G_1}$ and $T^*_{G_2}$ be the minimum completion times to deliver the common interest to $c_1$, $c_2$, $c_3$ with the grouping $G_1$ and  $G_2$, respectively.  We note that $T^*_{G_2}$ is calculated in Subsection \ref{subsection:large-asymmetric}.

First, we consider that $p_{e,3} \geq \max(p_{e,1}, p_{e,2})$. There will be a trade-off according to  two cases: 
\begin{itemize}
	\item \textit{Increase of user diversity}: As  the benefit to exploit user diversity increases (i.e., there users $c_1$, $c_2$, and $c_3$ in $G_2$ can help  each other but only $c_1$ and $c_2$ help  each other in $G_1$), the optimal completion time $T^*_{G_2}$ might be shorter   than $T^*_{G_1}$; 
	\item \textit{Decrease of sharing probability}: Similar to Section \ref{section:asymmetric}, adding a worse user $c_3$ to the existing group  reduces the sharing  probabilities between $c_1$ and $c_2$ because of the equal-reciprocal constraint. 
\end{itemize}
In other words, adding $c_3$ to the existing social group  increases the user diversity, while reducing the sharing probability. The trade-off between the user diversity and the sharing probability leads to a question:  \textit{should $c_3$  be grouped}? 

Second, we notice that  if $p_{e,3} < \max(p_{e,1}, p_{e,2})$,  then $G_2$ is better because adding $c_3$ to the existing social group  increases both the user diversity and the sharing probabilities. Therefore, we focus on $p_{e,1} \leq p_{e,2} \leq p_{e,3}$. 

%
%
%
%
%
%
%
%
%

Similar to Subsection \ref{subsection:large-asymmetric} and Theorem  \ref{theorem:linear-program}, we can calculate the ratio  $T^*_{G_1}/T^*_{G_2}$, which is larger than one as shown in Fig. \ref{fig:grouping}. From the numerical results, we find incentivizing the new coming $c_3$ minimizes the completion time. 

\begin{figure}
\centering
\includegraphics[width=.5\textwidth]{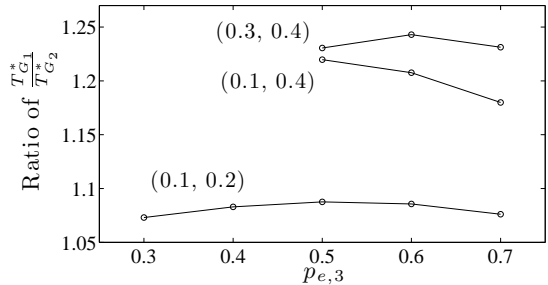}
\caption{The ratio of $T^*_{G_1}/T^*_{G_2}$ for various pairs of $(p_{e,1}, p_{e,2})$.}
\label{fig:grouping}
\end{figure}

While a more rigorous proof will be in the next subsection,  we are discussing an intuition here. Because $p_{e,3} \geq \max(p_{e,1}, p_{e,2})$, the time when $c_3$ receives the content dominates the completion time even though the sharing probabilities between $c_1$ and $c_2$ decrease when including $c_3$. Moreover, since $c_1$ and $c_2$ share content with $c_3$ in $G_2$, the user $c_3$ in $G_2$ will get the packet earlier. In other words, $G_2$  minimizes the completion time.

\subsection{Large asymmetric networks} \label{subsection:to-motivate-large-asymmertic}
We now consider a general case where there exists a social group  $(c_1, \cdots, c_N)$ and a new user $c_{N+1}$. Let $p_{e,i}$, $1 \leq i \leq N+1$, be their channel error probabilities, respectively. Without loss of generality, we assume that $p_{e,1} \leq \cdots \leq p_{e, N+1}$. Similarly, we denote two groupings $G_1=((c_1, \cdots, c_N),c_{N+1})$ and $G_2=(c_1, \cdots, c_{N+1})$. We start with a theorem regarding $p_{e, N+1} \geq 0.5$ as below.
\begin{theorem}
If $p_{e, N+1} \geq 0.5$,  to group $c_{N+1}$ minimizes the completion time, i.e., to group all is the best. 
\end{theorem}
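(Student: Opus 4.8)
The plan is to sandwich both completion times around the single scalar $1/(1-p_{e,N+1})$, establishing
\[
T^*_{G_2}\ \le\ \frac{1}{1-p_{e,N+1}}\ \le\ T^*_{G_1},
\]
from which $T^*_{G_2}\le T^*_{G_1}$ is immediate. The hypothesis $p_{e,N+1}\ge 1/2$ will enter only in the left inequality. The guiding principle, already flagged in the discussion preceding the theorem, is that the worst user $c_{N+1}$ is the bottleneck in both groupings, so the comparison reduces to how fast $c_{N+1}$ can be served.

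First I would prove the lower bound $T^*_{G_1}\ge 1/(1-p_{e,N+1})$. In $G_1$ the user $c_{N+1}$ is outside the social group, so by the key-exchange assumption it cannot decode any packet circulated inside $(c_1,\dots,c_N)$ and must obtain the common interest from a BS transmission addressed to it. For any schedule, the slot in which $c_{N+1}$ first receives is no earlier than the first ON slot of its B2D channel under the most aggressive policy (the BS serving $c_{N+1}$ every slot); skipping transmissions can only postpone reception. Hence the expected reception time of $c_{N+1}$ is at least $1/(1-p_{e,N+1})$, and the completion time of the whole instance is at least that of $c_{N+1}$.

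The core step is the upper bound $T^*_{G_2}\le T^{(N+1)}_{\cup}(p_{e,N+1})$, where $T^{(N+1)}_{\cup}(p)=1+\frac{1-(1-p)^{N+1}}{1-p^{N+1}}$ is the symmetric optimum of Eq. (\ref{eq:Ts-N}). Rather than optimizing, I would exhibit one feasible equal-reciprocal scheme for $G_2$ that already attains the symmetric value. The idea is channel degradation by thinning: for each user $c_i$, accept a genuine B2D reception only when an independent coin of bias $(1-p_{e,N+1})/(1-p_{e,i})\le 1$ lands heads, producing a virtual i.i.d. ON/OFF channel whose ON-probability is exactly $1-p_{e,N+1}$ for every $i$. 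On these virtual channels I would run the uniform designated-broadcaster policy of Eq. (\ref{eq:large-network}), whose completion time is precisely $T^{(N+1)}_{\cup}(p_{e,N+1})$. Because every virtual reception is a real reception, the scheme is implementable on the actual network, and because the virtual channels are symmetric across users the induced pairwise sharing amounts satisfy $\alpha_{i\to j}=\alpha_{j\to i}$, so the equal-reciprocal constraint holds. This gives $T^*_{G_2}\le T^{(N+1)}_{\cup}(p_{e,N+1})$. I expect this degradation/coupling construction to be the main obstacle: the stochastic domination is routine, but one must verify that discarding the surplus good realizations keeps the pairwise reciprocity balanced, which is exactly what symmetry of the thinned channels secures.

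Finally I would close the chain with the elementary inequality $T^{(N+1)}_{\cup}(p_{e,N+1})\le 1/(1-p_{e,N+1})$ valid for $p_{e,N+1}\ge 1/2$. Clearing the positive denominators $1-p$ and $1-p^{N+1}$ reduces it (writing $p=p_{e,N+1}$) to $1-2p\le (1-p)^{N+2}-p^{N+2}$; for $p\ge 1/2$ both sides are nonpositive, the two sides coincide at $p=1/2$, and the difference $(1-p)^{N+2}-p^{N+2}-1+2p$ is nonnegative on $[1/2,1)$. This is the single place where the threshold $1/2$ is needed, in exact parallel with Lemma \ref{theorem:large-symmetric}, where enlarging a symmetric group shortens the completion time precisely when $p_e>1/2$. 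Combining the three displays yields $T^*_{G_2}\le T^*_{G_1}$, so grouping all $N+1$ users minimizes the completion time.
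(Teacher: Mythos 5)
Correct, and essentially the paper's own argument: the paper proves the theorem via the chain $T^*_{G_1}\ge T_{c_{N+1}}\ge T^*_{\hat{G}}\ge T^*_{G_2}$, where $\hat{G}$ is the auxiliary symmetric group in which every user has error probability $p_{e,N+1}$, so $T^*_{\hat{G}}$ is exactly your $T^{(N+1)}_{\cup}(p_{e,N+1})$, the middle inequality passes through your scalar pivot $1/(1-p_{e,N+1})$, and the hypothesis $p_{e,N+1}\ge 0.5$ enters through Lemma \ref{theorem:large-symmetric}, which is precisely what your elementary inequality $1-2p\le(1-p)^{N+2}-p^{N+2}$ re-proves. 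Your only departures are in rigor, not route: you make explicit the thinning/coupling construction behind the paper's one-line assertion $T^*_{\hat{G}}\ge T^*_{G_2}$ (``all nodes in $\hat{G}$ have worse channels''), and you verify the monotonicity algebraically instead of citing the lemma.
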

\begin{proof}
We consider an auxiliary group $\hat{G}=(\hat{c}_1, \cdots, \hat{c}_{N+1})$, with  the error probabilities of each user $\hat{c}_i$, $i=1, \cdots, N+1$, being $p_{e, N+1}$.  Let $T^*_{\hat{G}}$ be the corresponding minimum completion time, and $T_{c_{N+1}}$ be the expected time when $c_{N+1}$ successfully receives the packet from the BS when $G_1$ is applied. Then, we have
\begin{align}
T^*_{G_1} \geq & T_{c_{N+1}} \label{eq:group-all-1}\\
\geq & T^*_{\hat{G}} \label{eq:group-all-2}\\
\geq & T^*_{G_2}, \label{eq:group-all-3} 
\end{align} 
where Eq. (\ref{eq:group-all-1}) is because $c_{N+1}$ should be satisfied; Eq. (\ref{eq:group-all-2}) is according to Lemma \ref{theorem:large-symmetric}; Eq. (\ref{eq:group-all-3}) is because all nodes in $\hat{G}$ have the  worse channels than those in $G_{2}$. 
\end{proof}

We remind that the reason why $G_1$ is better than $G_2$ is that including a user $c_{N+1}$ with a worse channel reduces the sharing probabilities between $c_1, \cdots, c_N$. Moreover, in group $G_2$ the sharing probabilities increase when $p_{e, N+1}$ decreases, while in group $G_1$ the sharing probabilities is the same for all $p_{e, N+1}$ value. Hence, when $p_{e, N+1} < 0.5$, the grouping $G_2$ is better than $G_1$ as the sharing probabilities is larger than the case of $p_{e, N+1}=0.5$, in which $G_2$ is better.     We therefore can conclude the optimality of grouping all as follows. 

\begin{theorem}
To incentivize a new coming users minimizes the completion time, no matter what the distribution of B2D channel errors is. 
\end{theorem}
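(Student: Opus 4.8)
The plan is to prove the statement by reducing it to the two regimes already set up in this subsection, namely $p_{e,N+1} \ge 0.5$ and $p_{e,N+1} < 0.5$, where after the sorting $p_{e,1} \le \cdots \le p_{e,N+1}$ the added user $c_{N+1}$ always carries the largest error probability. The first regime is exactly the preceding theorem, which already delivers $T^*_{G_2} \le T^*_{G_1}$ through the chain $T^*_{G_1} \ge T_{c_{N+1}} \ge T^*_{\hat{G}} \ge T^*_{G_2}$. Hence the only thing left to establish is the inequality $T^*_{G_2} \le T^*_{G_1}$ for $p_{e,N+1} < 0.5$, after which the full claim follows by combining the two cases.

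For the regime $p_{e,N+1} < 0.5$ I would anchor the argument at the threshold value $p_{e,N+1}=0.5$, where the preceding theorem already guarantees $T^*_{G_2} \le T^*_{G_1}$, and then show that lowering $p_{e,N+1}$ below $0.5$ (with $p_{e,1},\dots,p_{e,N}$ held fixed) only widens this gap. The two monotonicity facts I would make precise are: (i) in $G_1$ the existing group $(c_1,\dots,c_N)$ runs its own optimal equal-reciprocal scheme, whose internal sharing probabilities do not involve $c_{N+1}$ at all, so the dependence of $T^*_{G_1}$ on $p_{e,N+1}$ enters only through the isolated BS service of $c_{N+1}$; and (ii) in $G_2$, decreasing $p_{e,N+1}$ relaxes the equal-reciprocal constraints (Const. 1--3 of Theorem \ref{theorem:linear-program}) that tie $c_{N+1}$ to the others, enlarging the feasible set of sharing probabilities and therefore not increasing the optimal value $T^*_{G_2}$. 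Intuitively a better new user both needs less help and can give more help, so $G_2$'s advantage cannot shrink as $p_{e,N+1}$ falls.

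To turn this into a rigorous comparison I would run both groupings on a single common sequence of channel realizations and compare the epoch at which the last user becomes satisfied. Under $G_2$, user $c_{N+1}$ retains every reception opportunity it had under $G_1$ (direct reception whenever its own B2D channel is ON) and gains the extra chance of receiving the packet over D2D from an already-served neighbor, which can only advance the time at which $c_{N+1}$ is satisfied. At the same time, because every error probability is now below $0.5$, admitting $c_{N+1}$ as an additional potential broadcaster raises the probability that at least one user holds the packet after a BS broadcast, so none of $c_1,\dots,c_N$ is served later than under $G_1$. Combining these, the last-satisfied time under $G_2$ should be pointwise no larger than under $G_1$, yielding $T^*_{G_2} \le T^*_{G_1}$ in expectation.

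The main obstacle I anticipate is disentangling the tension created by the equal-reciprocal constraint: adding the worst user $c_{N+1}$ tightens the reciprocity equalities and can force the existing users' sharing probabilities down, so the reduction in $c_{N+1}$'s reception time must be shown to dominate any slowdown it imposes on $c_1,\dots,c_N$. Making the phrase \emph{the sharing probabilities increase as $p_{e,N+1}$ decreases} quantitatively correct requires a monotone-dependence (envelope) argument on the linear program of Theorem \ref{theorem:linear-program}, showing that its optimal value is nondecreasing in $p_{e,N+1}$, together with a careful coupling that tracks both the direct and the D2D reception times so that gains and losses are weighed on the same realizations. This LP-monotonicity step, rather than the threshold case $p_{e,N+1}\ge 0.5$, is where the substantive work lies.
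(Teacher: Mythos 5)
Your proposal follows essentially the same route as the paper: the case $p_{e,N+1} \geq 0.5$ is disposed of by the preceding theorem's chain $T^*_{G_1} \geq T_{c_{N+1}} \geq T^*_{\hat{G}} \geq T^*_{G_2}$, and the case $p_{e,N+1} < 0.5$ is handled by anchoring at the threshold $p_{e,N+1}=0.5$ and arguing that decreasing $p_{e,N+1}$ increases the sharing probabilities inside $G_2$ while leaving the internal sharing of $G_1$ unchanged. The LP-monotonicity/coupling work you flag as the remaining substantive step is in fact not carried out in the paper either---its treatment of the sub-$0.5$ regime is precisely the informal monotonicity claim you describe---so your plan matches the published argument and, if anything, is more candid about where the rigor is missing.
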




\end{document}